\renewcommand{\phi}{\varphi}
\newcommand{\OOO}[1]{O \left(#1\right)}
\def\P{\mathbb P}
\def\be{\begin{equation}}
\def\ee{\end{equation}}
\def\bea{\begin{eqnarray}}
\def\eea{\end{eqnarray}}
\def\ni{\noindent}
\def\nn{\nonumber}
\DeclareMathOperator{\diag}{diag}
\def\ie{\textit{i.e. }}
\def\a{\alpha}
\def\e{\varepsilon}
\def\d{\delta}
\def\g{\gamma}
\def\b{\beta}
\def\D{\Delta}
\def\l{\lambda}
\def\r{\rho}
\def\s{\sigma}
\def\R{\mathbb{R}}
\def\N{\mathbb{N}}
\theoremstyle{plain}
\newtheorem{theorem}{Theorem}[section]
\newtheorem{lemma}[theorem]{Lemma}
\newtheorem{corollary}[theorem]{Corollary}
\numberwithin{equation}{section}
\definecolor{light}{gray}{.9}
\author{Giuseppe Genovese}
\address{Giuseppe Genovese: Department of Mathematics and Computer Science, University of Basel, Spiegelgasse 1, 4051 Basel, Switzerland.}
\email{giuseppe.genovese@unibas.ch}
\author{Daniele Tantari}
\address{Daniele Tantari: University of Bologna, Mathematics Department, Via Zamboni, 33, 40126, Bologna (Italy).}
\email{daniele.tantari@unibo.it}
\title[Legendre equivalences of spherical Boltzmann machines]
{Legendre equivalences of spherical Boltzmann machines}
\date{\today}
\begin{document}

\begin{abstract}
We study either fully visible and restricted Boltzmann machines with sub-Gaussian random weights and spherical or Gaussian priors. We prove that the free energies of the spherical and Gaussian models are related by a Legendre transformation. Incidentally our analysis brings also a new purely variational derivation of the free energy of the spherical models. 
 
%\vspace{0.5cm}
%
%\ni {\bf MSC:} 35Q55, 37K10, 82B05.
\end{abstract}

\pagestyle{plain}

\maketitle

\section{Introduction}

Originally inspired by statistical physics \cite{hop}, Boltzmann machines (BMs) \cite{BM,rm} are among the most studied data generative models, playing a central role in the phenomenal progresses of machine learning through neural networks of the last two decades. In particular restricted BMs (RBMs) constitute a cornerstone of unsupervised learning, mainly for the very successful training algorithms developed \cite{CD, CDp}, working also for many interesting deep architectures \cite{DBN, auto, belief}, for which RBMs are used as the basic building blocks \cite{deepBM, deepbook}. 

Concretely a BM is a probability distribution of the Gibbs type which is aimed to reproduce the true distribution of the data. In the much useful neural network interpretation the units of the machine should mirror the data, that is typical configurations according to the BM distribution are desired to be close to typical data. Therefore two ingredients are crucial to build up a BM: the energy function and the a priori unit distribution. The main focus of the paper will be on fully visible BMs, namely Hopfiel models, and RBMs with spherically symmetric priors. We will give a formula for the free energy of these BMs pointing out a Legendre duality between rigid spherical priors and a certain quite general class of sub-Gaussian distributions, already investigated for the Sherrington-Kirkpatrick energy \cite{legendre}. This latter equivalence is achieved by a suitable adaption of a very much established method from the statistical physics tradition, namely equivalence of ensemble (spherical and Gaussian). 

\

\subsection{Set up}

First we will introduce the models we will deal with. We start by the energy function.

Let $\{\xi_{ij}\}_{i=1,\ldots,N_1\,,j=1,\ldots,N_2}$ a doubly indexed sequence of i.i.d. centred sub-Gaussian r.vs. with 
$$
E[\xi_{ij}\xi_{hk}]=\d_{ih}\d_{jk}\,, 
$$
For definiteness we  assume that
$$
\frac{N_1}{N_1+N_2}\to \a\in(0,1)\,.
$$
We shall look at this sequence in two different ways, namely as entries of a $N_1\times N_2$ random matrix $\Xi$ or a collection of $N_2$ patterns in $\R^{N_1}$, defining a sample covariance matrix  $\frac{1}{N_2}\Xi\Xi^T\in\R^{N_1\times N_1}$. In either cases we can use the following two important properties of rectangular random matrices with centred independent subgaussian entries with unitary variance \cite{bai,versh}. For $A\in\R^{N\times N}$ we denote its eigenvalues as $\l_i:=\l_{i}(A)$, $i=1,\ldots N$; for $A\in\R^{N_1\times N_2}$ we denote its singular values as $\s_i:=\s_{i}(A)$, $i=1,\ldots N_1$.

\

\begin{itemize}
\item[\bf P1)] The empirical distribution of the eigenvalues of $\frac{1}{N_2}\Xi\Xi^T$ converges a.s. to the Marchenko-Pastur law:
$$
\frac{1}{N_1}\sum_{j=1}^{N_1}\d(\l-\l_i(\frac{1}{N_2}\Xi\Xi^T)) \rightharpoonup \r_{MP}(\l;\a)\quad \P-a.s.
$$
where% (we shorten $[x]_+:=x1_{x>0}$)
\be\label{eq:MP}
\r_{MP}(\l;\a):= (2-\frac{1}{\a})^+\d_0(\l) +\frac{(1-\alpha)}{2\pi \alpha}\frac{\sqrt{(\l-\l_-)(\l_+-\l)}}{\l}1_{[\l_-,\l_+]}(\l)
\ee
with  $\l_{\pm}:=\left(1\pm\sqrt{\a/(1-\a)}\right)^2$. 
%\left[\frac{2\a-1}{\a}\right]_+\d(\l)

\item[\bf P2)] The spectrum of $\frac{1}{N_2}\Xi\Xi^T$ is localised in an interval with large probability:

\be\label{eq:impo-spectrum}
\P(\|\frac {1}{N_2}\Xi\Xi^T-\mathbb I\|_{op}\geq t+\sqrt{\l_+}-1)\leq 2e^{-\frac {N_1t^2}{2}}\,. 
\ee
\end{itemize}

\

We will deal with two kind of Boltzmann machines: Hopfield models (HMs) and restricted Boltzmann machines (RBMs). Their energy functions (or Hamiltonians) are 
\bea
H_{N_1,N_2}^{HM}&:=&-\frac{1}{N_1+N_2}\sum_{j=1}^{N_2}\sum_{i<k}^{N_1} \xi_{ij}\xi_{kj}x_ix_k%=-\frac{1}{N_1+N_2}\sum_{i<k}^{N_1} (\Xi\Xi^T)_{ik}x_ix_k
\,,\label{eq:HM}\\
H_{N_1,N_2}^{RBM}&:=&-\frac{1}{\sqrt{N_1+N_2}}\sum_{j=1}^{N_2}\sum_{i=1}^{N_1} \xi_{ij}x_iy_j\,\label{eq:RBM}
\eea
(we will often drop all the indexes from the energies to lighten the notations).

\

In the Hopfield model units have one single choice for the prior distribution, while a RBM is an undirected bipartite system in which we can have different priors for each layer. With this in mind, we can now introduce the prior distributions we shall deal with. Let $S^{N}(R)$ be the $(N-1)$-dimensional sphere in $\R^N$ with radius $R\sqrt N$. Define the following a priori probability measures on $\R^N$:
\be\label{eq:prior}
\begin{cases}
\s_{N,R}(dx)&\mbox{uniform on $S^N(R)$}\,;\\
\g_{N,\theta}(dx)&\mbox{centred Gaussian with covariance $\theta \mathbb I$}\,.
\end{cases}
\ee
Models with Gaussian priors are typically ill-defined for low temperatures and need a sub-Gaussian regularisation. Let  $r\,:\R\mapsto\R$ such that there is $\e>0$ with
$$
\lim_{x\to\infty}\frac{r(x)}{x^{2+\e}}=\infty\,.
$$
We define the regularised prior on $\R^N$
\be\label{eq:Reg-prior1}
\r_N(dx):= e^{-Nr\left(\frac{\|x\|}{\sqrt N}\right)}\g_{N,\theta}(dx)\,. 
\ee
For instance in \cite{BDG} it was considered $r(x)=\b x^4/4$ while in \cite{gauss,legendre} $r(x)=\b x^4/4-\l x^2/2$. 

This notion extends easily to two layer settings. We introduce some $r\,:\,\R^2\mapsto \R$ so that
$$
\lim_{x^2+y^2\to\infty} \frac{r(x,y)}{x^2}=\lim_{x^2+y^2\to\infty} \frac{r(x,y)}{y^2}=\infty\,.
$$
Then we define the following measure on $\R^{N_1}\times \R^{N_2}$
\be\label{eq:Reg-prior2}
\r^2_{N_1,N_2}(dxdy):= e^{-\sqrt{N_1N_2}r\left(\frac{\|x\|}{\sqrt N_1},\frac{\|y\|}{\sqrt N_2}\right)}\g_{N_1,\theta}(dx)\g_{N_2,\theta}(dy)\,. 
\ee
The idea is that $\r$ can be used to regularise a single layer, while $\r^2$ regularises two layers at once. A simple example is low rank matrix factorisation with Gaussian priors in which $r(x,y)=x^2y^2/2$ \cite{leo}. 

One technical problem is that the support of $\s_{N,R}$ has zero $\g_{N,\theta}$-measure, that is $\g_{N,\theta}(S^{N}(R))=0$. For this reason for a given $\e>0$ we need to introduce the spherical shells
\be\label{eq:Sfshell}
S_{N,R,\e}:=\{x\in\R^N\,:\,R-\e\leq\|x\|_2\leq R+\e\}\,.
\ee
We denote (here and further $|A|$ is the Lebesgue measure of the set $A$)
$$
\s_{N,R.\e}(dx):=|S_{\e,N,R}|^{-1}1_{\{S_{\e,N,R}\}}(x)dx\,
$$
the uniform probability on a shell and note that it is a.c. w.r.t. $\g_{N,\theta}$. We will also often make use of the simple relation
\be\label{eq:shells-surface}
|S_{\e,N,R}|=\e|S_{N,R}|+\OOO{\e^2}\,.
\ee

%\textcolor{red}{Sometimes in this paper we will write (for a measurable $f$)
%$$
%\int \s_{N,R}(dx) f(x)\leq \int_{A} dx f(x)\,, \quad S^{N}(R)\subset A\,,\quad |A|>0\,. 
%$$
%This is of course an abuse of notation as the spherical measure is singular w.r.t. Lebesgue. In fact by this writing we mean 
%$$
%\int \s_{N,R,\e}(dx) f(x)\leq \int_{A} dx f(x)\,
%$$
%uniformly in $\e>0$ for a suitable family of shells $\{S_{\e,N,R}\}_{\e>0}$ such that $S_{\e,N,R}\subset A$. For the same reason we can sometimes let $\s_{(\cdot),\e=0}=\s_{(\cdot)}$. 
%}
%\
With these definitions at hand, we can introduce the probability distributions defining our Boltzmann machines. Let $\mu_{(\cdot)}\in\{\s_{(\cdot)},\g_{(\cdot)},\r_{(\cdot)}\}$ and $\mu^2_{(\cdot)}\in\{\s_{(\cdot)},\g_{(\cdot)}\}\otimes \{\s_{(\cdot)},\g_{(\cdot)}\}\cup\{\r_{(\cdot)}^2\}$ denote one prior among the one introduced before respectively for the one-layer and the bipartite machine. We have for $\b>0$
\be\label{eq:Gibbs}
G^{HM}_{N_1,N_2,\b}(dx):=Z_{N_1,N_2,\b}^{-1}\mu_{N_1}(dx)e^{-\b H(x)}\,\qquad G^{RBM}_{N_1,N_2,\b}:=Z_{N_1,N_2,\b}^{-1}\mu^2_{N_1,N_2}(dxdy)e^{-\b H(x,y)}\,.
\ee
The normalisation $Z_{N_1,N_2,\b}$ is called partition function and it needs not to be the same, despite the symbol. Moreover
\be\label{eq:free}
A^{HM}_{N_1,N_2}(\b):=\frac{1}{N_1+N_2}\log Z_{N_1,N_2,\b}\qquad A^{RBM}_{N_1,N_2}(\b):=\frac{1}{N_1+N_2}\log Z_{N_1,N_2,\b}\,. 
\ee

Most interesting is to evaluate the last quantities in the so-called thermodynamic limit, namely
$$
N_1,N_2\to\infty\,,\quad\mbox{such that }\qquad \lim_{N_1,N_2}\frac{N_1}{N_1+N_2}=:\a>0\,.
$$ 
The regime $\a\neq0,1$ is called high-load and we will stick to that in this work. 

Being Lipschitz functions of the weights, free energies always satisfies a concentration inequality which ensures their a.s. convergence to the expected value. This self-averaging property will be exploited throughout without further mention. 

For simplicity we will assume always the distributions of the two layers to have the same parameters, \ie radius and variance. The general case requires a trivial extension of our formulas. 

In general $A_\mu(X,\a,\b)$ denotes the free energy of a BM with prior $\mu$ whose parameters are $X$. We will often drop the descriptive labels from the Hamiltonian, Gibbs measure and free energy when it will be clear from the context to which one we refer to. Only exception, the quantities of interest referred to spherical shell priors are indicated by $\widehat \cdot$ for all the models.

\

\subsection{Main result and organisation of the paper}

We will focus on the free energy associated to BMs with the particular priors introduced above. 
We will prove the following equivalence at the level of free energies, which can be related by a marginalisation (m) or a Legendre transform (LT): 

\

\begin{center}
\begin{tikzcd}
  HM_\s \arrow[r,leftrightarrow, "m", blue] \arrow[d,leftrightarrow,"LT", blue]
    & RBM_{\s,\g} \arrow[d,leftrightarrow,"LT", blue] \arrow[r,leftrightarrow, "LT",red]&RBM_{\s\s}\arrow[r,leftrightarrow, "LT",red]&RBM_{\r^2}\\
  HM_{\r} \arrow[r, leftrightarrow, "m", blue]
&RBM_{\r\g} \arrow[ru,leftrightarrow, "LT",red] \end{tikzcd}
\end{center}

We will not concern here about low-load ($\a=0,1$), yet some of the equivalences we state hold also in this regime. More precisely, red arrows indicate equivalences valid only in high load while blue arrow equivalences hold regardless of $\a$.

Marginalisation is the usual trick of RBMs. The two layers are coupled linearly, so that one can integrate out \'a la Stratonovich the units from the Gaussian layer in the partition function of a RBM to obtain the partition function of a HM (with $\b^2$ replacing $\b$). 
All the relevant quantities (e.g. Gibbs measures, free energy, order parameters etc) of one model can be computed directly from the one of the other one. For instance the equivalence $HM_\s\leftrightarrow RBM_{\s,\g}$ at the level of Gibbs measure reads as
$$
Z^{-1}_{N_1,N_2,\b}\s_{N_2,R}(dx)\int_{\R^{N_2}}\g_{N_2,\theta}(dy)e^{\b H_{N_1,N_2}^{RBM}}=Z^{-1}_{N_1,N_2,\b}\s_{N_1,R}(dx)e^{\frac{\b^2}{\theta^2} H^{HM}_{N_1}}\,,
$$
since the  partition functions of the two models are numerically the same.

Legendre transforms are where the idea of equivalence of ensembles exploits and more precise statements are given in Theorem \ref{th:main} below. A Gaussian prior of $N$ units concentrates on a N-dimensional sphere of radius proportional to $\sqrt N$. To find the optimal radius we slice up the Gibbs measure at the level of the Gaussian prior and look for the most relevant contribution to the free energy. This strategy yields naturally a variational principle of the Legendre type relating the spherical and Gaussian free energies. Moreover we identify the square radius of the optimal sphere $R^2$ and the variance of the Gaussian model $\theta^{-1}$ as Legendre conjugate variables. 

The main idea is very simple; we briefly outline the heuristics for the equivalence $RBM_{\s\s}\xleftrightarrow{LT} RBM_{\r^2}$ (the other cases are similar). By the standard disintegration of finite-dimensional Gaussian measures into spheres we have
\bea
&&\int_{\R^{N_1}\times \R^{N_1} }\mu^2(dxdy)e^{-\b H(x,y)}\nn\\
&=&\int_{0}^{\infty} dR_1\int_{0}^{\infty}dR_2 e^{-\sqrt{N_1N_2}r\left(\frac{R_1}{\sqrt{N_1}},\frac{R_2}{\sqrt{N_2}}\right)-\frac{R_1}{2\theta_1}-\frac{R_2}{2\theta_2}}\nn\\
&&\frac{|S_{N_1}(R_1)\times S_{N_2}(R_2)|}{\sqrt{2\pi\theta}^{N_1+N_2}}\int_{S_{N_1}(R_1)\times S_{N_2}(R_2)}\s_{N_1,R_1}(dx)\s_{N_2,R_2}(dy)e^{-\b H(x,y)}\,.\label{eq:cont}
\eea
We adjusted the normalisation of the inner integral so to get the partition function of $RBM_{\s\s}$. Thus we continue the chain of identities as
\be
\eqref{eq:cont}=\int_{0}^{\infty} dR_1\int_{0}^{\infty}dR_2 e^{-\sqrt{N_1N_2}r\left(\frac{R_1}{\sqrt{N_1}},\frac{R_2}{\sqrt{N_2}}\right)-\frac{R_1}{2\theta_1}-\frac{R_2}{2\theta_2}+\log\left(\frac{|S_{N_1}(R_1)\times S_{N_2}(R_2)|}{\sqrt{2\pi\theta}^{N_1+N_2}}\right)+(N_1+N_2)A_{\s\s}(R_1,R_2)}\,. 
\ee
Then we can evaluate the integral by the usual Laplace method, since by a simple scaling argument the maximum must be attained at the scale $R_1,R_2\sim\sqrt{N_1+N_2}$. 

To give rigorous grounds to this heuristics we need some few properties. First, we have to control the thermodynamic limit of the free energy of spherical models $HM_\s$ and $RBM_{\s\s}$. These limits are computed in Section \ref{section:sferico}. Secondly, to properly implement the above disintegration formula, we need the regular behaviour of the model with spherical shell prior as the thickness of the shell vanish. In other words, at the level of free energy thermodynamics should favour those configurations on the shell which actually lie on a given sphere into it. This is proven in sections \ref{section:sferico} and \ref{section:legendre}. Lastly, the unit configurations outside any ball of radius growing faster  than $\sqrt N_1, \sqrt N_2$ must give vanishing contribution in the thermodynamic limit. This is proven in Section \ref{section:legendre}, where the proof of our main result, \ie subsequent Theorem \ref{th:main}, is completed.

\begin{theorem}\label{th:main}
Assume $\bf P1), P2)$ and let $\r_N, \r^2_N$ be defined as in (\ref{eq:Reg-prior1}), (\ref{eq:Reg-prior2}) and discussion around. Then
\begin{itemize}
\item[i)] $HM_{\s}\xleftrightarrow{LT} HM_{\r}$:
\be\label{eq:mainHM}
A_\r(\theta,\a,\b)=\a \sup_{R>0}\left( \a^{-1}A_{\s}(R,\b) -\frac{R^2}{2\theta}+\log R- r(R) -\frac12(\log\theta-1) \right)\,;
\ee
\be\label{eq:mainHM-dual}
A_{\s}(R,\b)=\a \inf_{\theta>0}\left(\frac{R^2}{2\theta}+\a^{-1}A_\r(\theta,\a,\b) -\log R+ r(R)+\frac12(\log\theta-1)\right)\,.
\ee
\item[ii)] $RBM_{\s\s}\xleftrightarrow{LT} RBM_{\s,\g}$:
\be\label{eq:mainRBM1}
A_{\s,\g}(\theta,R_2,\a,\b)=\a\sup_{R_1>0}\left(\a^{-1}A_{\s\s}(R_1,R_2,\b)-\frac{R_1^2}{2\theta}+\log R_1 -\frac12(\log\theta-1)\right)\,;
\ee
\be\label{eq:mainRBM1-dual}
A_{\s\s}(R_1,R_2,\b)=\a\inf_{\theta>0}\left(\frac{R_1^2}{2\theta}+\a^{-1}A_{\s,\g}(\theta,R_2,\a,\b)-\log R_1+\frac12(\log\theta-1)\right)\,.
\ee
\item[iii)] $RBM_{\s\s}\xleftrightarrow{LT} RBM_{\r^2}$:
\be\label{eq:mainRBM2}
A_{\r^2}(\theta,\a,\b)=\sup_{R_1,R_2>0}\left(A_{\s\s}(R_1,R_2)-\frac{\a R_1^2+(1-\a)R_2^2}{2\theta}+\log(R_1^\a R_2^{1-\a})-\sqrt{\a(1-\a)}r(R_1,R_2)-\frac12(\log\theta-1)\right)\,;
\ee
\be\label{eq:mainRBM2-dual}
A(R_1,R_2)=\inf_{\theta_1,\theta_2>0}\left(\frac{\a R_1^2}{2\theta_1}+\frac{(1-\a)R_2^2}{2\theta_2}+A_{\r^2}(\theta_1,\theta_2,\a,\b)-\log(R_1^\a R_2^{1-\a})+\sqrt{\a(1-\a)}r(R_1,R_2)+\frac12(\log\theta_1^\a\theta_2^{1-\a}-1)\right)\,. 
\ee
\end{itemize}

\end{theorem}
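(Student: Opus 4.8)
The plan is to prove each supremum identity of Theorem \ref{th:main} by turning the heuristic around \eqref{eq:cont} into a rigorous argument --- disintegration of the Gaussian prior into concentric spheres, Stirling for the spherical surface factor, and the Laplace method on the remaining radial integral(s) --- and then to deduce each infimum identity by Fenchel inversion of its supremum counterpart. I describe case i), formula \eqref{eq:mainHM}; cases ii) and iii) are the same argument, only with the bookkeeping of the two layers. Starting from $Z^{HM}_{\r}=\int_{\R^{N_1}}e^{-N_1 r(\|x\|/\sqrt{N_1})}\,\g_{N_1,\theta}(dx)\,e^{-\b H(x)}$ with $H$ as in \eqref{eq:HM}, disintegrate Lebesgue measure over the radius $\rho=\|x\|$ and rescale $\rho=R\sqrt{N_1}$ to get
\[
Z^{HM}_{\r}=\int_{0}^{\infty}\!dR\;e^{-N_1 r(R)}\,\frac{\sqrt{N_1}\,|S_{N_1}(R)|}{(2\pi\theta)^{N_1/2}}\,e^{-\frac{N_1R^2}{2\theta}}\int_{S_{N_1}(R)}\s_{N_1,R}(dx)\,e^{-\b H(x)}\,,
\]
where $|S_{N_1}(R)|$ is the surface measure of the sphere of radius $R\sqrt{N_1}$. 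By Stirling's formula the bracketed prefactor satisfies $\tfrac{1}{N_1}\log\!\big(\sqrt{N_1}\,|S_{N_1}(R)|\,(2\pi\theta)^{-N_1/2}e^{-N_1R^2/(2\theta)}\big)\longrightarrow \log R-\tfrac{R^2}{2\theta}-\tfrac12(\log\theta-1)$ uniformly on compact subsets of $(0,\infty)$, which are exactly the $R$- and $\theta$-dependent terms in \eqref{eq:mainHM}.

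The next step is to identify the inner integral with the partition function of $HM_{\s}$. Since $\g_{N_1,\theta}(S_{N_1}(R))=0$, it is convenient to work with thin shells $S_{N_1,R,\e}$ as in \eqref{eq:Sfshell}, using $|S_{N_1,R,\e}|=\e\,|S_{N_1}(R)|+\OOO{\e^2}$ from \eqref{eq:shells-surface}, which is harmless after taking logarithms and letting $\e\to0$ at the end. Invoking the control of the spherical and shell free energies proved in Sections \ref{section:sferico} and \ref{section:legendre} --- namely $\tfrac1{N_1+N_2}\log\widehat Z_{\s,\e}\to\widehat A_{\s,\e}(R,\b)$ in the thermodynamic limit and then $\widehat A_{\s,\e}(R,\b)\to A_{\s}(R,\b)$ as $\e\to0$, with $R\mapsto A_{\s}(R,\b)$ continuous and of at most quadratic growth --- the inner integral contributes $e^{(N_1+N_2)(A_{\s}(R,\b)+o(1))}$. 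Collecting everything one obtains $Z^{HM}_{\r}=\int_0^\infty dR\,\exp\!\big(N_1\Psi_\theta(R)+o(N_1)\big)$ with $\Psi_\theta(R):=\a^{-1}A_{\s}(R,\b)+\log R-r(R)-\tfrac{R^2}{2\theta}-\tfrac12(\log\theta-1)$, and the Laplace method applies: the contribution of $R\to0^+$ is suppressed because $\log R\to-\infty$ while the other terms remain bounded below, and the contribution of $R\to\infty$ is suppressed because the superquadratic growth of $r$ dominates the at most quadratic $\a^{-1}A_{\s}(R,\b)-R^2/(2\theta)$ --- this negligibility of radii much larger than $\sqrt{N_1}$ being precisely the statement established in Section \ref{section:legendre}. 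Hence $\tfrac1{N_1}\log Z^{HM}_{\r}\to\sup_{R>0}\Psi_\theta(R)$, and multiplying by $N_1/(N_1+N_2)\to\a$ yields \eqref{eq:mainHM}.

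Cases ii) and iii) run the same scheme on \eqref{eq:RBM}. For $RBM_{\s,\g}$ one disintegrates only the Gaussian layer into spheres, producing a single extra variational radius with prefactor rate $\log R_1-R_1^2/(2\theta)-\tfrac12(\log\theta-1)$ weighted by $N_1/(N_1+N_2)\to\a$ and no regulariser term (the prior being exactly Gaussian), so a one-dimensional Laplace method yields \eqref{eq:mainRBM1}. For $RBM_{\r^2}$ one disintegrates both Gaussian layers, so that the joint regulariser $r(R_1,R_2)$ of \eqref{eq:Reg-prior2} survives, weighted by $\sqrt{N_1N_2}/(N_1+N_2)\to\sqrt{\a(1-\a)}$, the two surface prefactors contribute $\a(\log R_1-R_1^2/(2\theta))+(1-\a)(\log R_2-R_2^2/(2\theta))-\tfrac12(\log\theta-1)$, and a two-dimensional Laplace method yields \eqref{eq:mainRBM2}; admitting distinct variances $\theta_1,\theta_2$ only relabels the corresponding terms, as in \eqref{eq:mainRBM2-dual}. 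The marginalisation (blue) equivalences are the elementary Gaussian/Hubbard--Stratonovich integration of the linearly coupled layer and make the diagram commute, so the identities are mutually consistent.

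Finally, the dual infimum identities are Fenchel inversions of the supremum ones. From \eqref{eq:mainHM}, for every $R,\theta>0$ one has $\a^{-1}A_{\s}(R,\b)\le\a^{-1}A_{\r}(\theta,\a,\b)+\tfrac{R^2}{2\theta}-\log R+r(R)+\tfrac12(\log\theta-1)$, which is the inequality ``$\le$'' in \eqref{eq:mainHM-dual}; for the reverse one must produce, for each $R$, a $\theta$ realising equality, i.e.\ invert the concave conjugate appearing in \eqref{eq:mainHM}. This is legitimate because $v\mapsto \a^{-1}A_{\s}(\sqrt v,\b)+\tfrac12\log v-r(\sqrt v)$ is concave and upper semicontinuous --- concavity being read off the explicit spherical free energy of Section \ref{section:sferico}, and equivalently reflecting that $\theta\mapsto A_{\r}(\theta,\a,\b)$ is, up to the explicit affine-in-$\log\theta$ term, convex in $1/\theta$ (being a limit of log-Laplace transforms of positive measures in the variable $1/(2\theta)$) --- so Fenchel--Moreau applies; the same argument gives \eqref{eq:mainRBM1-dual} and \eqref{eq:mainRBM2-dual}, the latter optimising over both $\theta_1,\theta_2$. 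The main obstacle is not the algebra but the two uniformity inputs it rests on: the regular behaviour of the spherical-shell models as the shell thickness $\e\to0$, which legitimises slicing the Gaussian prior into exact spheres, and the negligibility of radial configurations of size much larger than $\sqrt{N_1}$ (resp.\ $\sqrt{N_1},\sqrt{N_2}$) on the non-compact radial domain, which legitimises the Laplace estimate uniformly in $R$; both are exactly what Sections \ref{section:sferico} and \ref{section:legendre} are devoted to, and their combination with the Laplace method completes the proof.
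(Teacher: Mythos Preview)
Your proposal is correct and follows essentially the same route as the paper: disintegrate the (regularised) Gaussian prior into concentric shells, identify the inner integral with the spherical(-shell) partition function, control the tail region of large radii, and extract the variational principle by a Laplace/maximum argument, then invert by Legendre duality. The only cosmetic difference is that the paper implements the Laplace step via a \emph{discrete} slicing of $\R^{N_1}$ into finitely many shells $S^{[i]}_{N_1,\e}$ of thickness $2\e$ and the sandwich $\max_i Z[i]\le Z\le (N_1^\d/\e)\max_i Z[i]$, whereas you phrase it as a continuous radial integral with the classical Laplace method; the two are equivalent and rely on the same two inputs you correctly isolate (Lemma \ref{lemma-gusci} for the $\e\to0$ exchange, and Lemma \ref{lemma:coda} for the tail). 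Your treatment of the dual identities via Fenchel--Moreau is in fact slightly more detailed than the paper's, which simply asserts that the inverse Legendre transform is well defined and involutive.
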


The other equivalences of the scheme above can be derived by combining $i), ii), iii)$ and marginalisations. Albeit we will not include that in this paper, these dualities can be established also for the Gibbs distributions (\ref{eq:Gibbs}). 

\

\subsection{Related literature}

Once the model has been properly regularised, spherical, Gaussian or sub-Gaussian priors are all equivalent; so we shall speak indistinctly of Gaussian BMs in what follows.  

The use of Gaussian visible variables is useful to handle real data and has been suggested since the beginning of the theory \cite{hopG}. However the learning and retrieval capabilities of the fully visible BM with Gaussian units are not as good as its $\pm1 $ counterpart at low-load \cite{lungo}, and at high-load they are totally useless \cite{perez,lungo}.  
Restricted architectures are more interesting. RBMs with Gaussian visible and latent variables have been used for instance for factor analysis \cite{ICC, prodgauss} and collaborative filtering \cite{salamino}. In general training, through e.g. contrastive divergence, is slower than that  in a Bernoulli-Gaussian machine \cite{guide,gaussCD,karakida,decelle} and also retrieval is less pronounced \cite{lungo, RBMs}. 

Independently on their performances, Gaussian BMs are of great theoretical relevance from the viewpoint of spin glasses, since their simpler mathematical structure helps our understanding of the, much more complicated,  discrete models. Previous results on the model have been obtained in \cite{AC} and \cite{baik1}. In \cite{baik1} the authors achieve the same result as in our Theorem \ref{Th:Bip-sfer} and prove much more: fluctuations of the free energy are shown to be Gaussian in high temperature and Tracy-Widom for low temperature. The assumptions on the weight distribution are more general then ours, as they only require finiteness of the moments. Yet the method there employed is a sophisticated and technical random matrix argument and our approach is certainly lighter and more accessible to non-specialists.
In \cite{AC} a variational principle for the free energy has been proven only for small $\b$, by means of the so-called Latala method. One great merit of the approach of this paper is to provide a clear interpretation of the replica symmetric nature of the variational formula for the free energy (formulated in terms of the overlap), which is absent in \cite{baik1} and in the present work, even though by a direct comparison with \cite{legendre} one can see that our Lagrange multipliers ($a,b$ below) are essentially shifted overlaps.  In any case it is remarkable that the free energy of a doubly spherical RBM satisfies a fully convex minimisation principle, which is a crucial difference compared to the $\min\max$ of Gauss-Bernoulli \cite{bgg} and Bernoulli-Bernoulli RBMs \cite{Bip}. The reason for that eludes our current understanding and we must defer the discussion of this point to future works. 

%\subsection{Acknowledgements} Io non devo ringraziare nessuno, vedi tu

\section{Free energy of spherical models}\label{section:sferico} 

In this section we study the HM Hamiltonian (\ref{eq:HM}) and RBM Hamiltonian (\ref{eq:RBM}) with the spherical prior in (\ref{eq:prior}).  Our main results are
\begin{theorem}\label{Th:HM-sfer}
Let $A_{N_1,N_2}$ denote the free energy of $HM_{\s}$. It holds $P$-a.s.
\be\label{eq:AHMsfer}
\lim_{N_1,N_2\to\infty} A_{N_1,N_2} = \a\min_{2q\geq \b(1-\a)\l_+}\left(q R^2 -\frac 12\int\r_{MP}(\l;\a)\log(2q-\b(1-\a)\l) d\l-\log R-\frac12\right)\,.
\ee
\end{theorem}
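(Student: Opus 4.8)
The plan is to compute the partition function $Z_{N_1,N_2,\b}=\int_{S^{N_1}(R)}\s_{N_1,R}(dx)\,e^{-\b H^{HM}(x)}$ by diagonalising the quadratic form and then doing an exact Gaussian-type integral over the sphere via a contour (saddle-point) representation. First I would rewrite the Hamiltonian \eqref{eq:HM} as $-\b H^{HM}(x)=\frac{\b}{2(N_1+N_2)}\big(\sum_j(\sum_i\xi_{ij}x_i)^2-\sum_{i,j}\xi_{ij}^2 x_i^2\big)=\frac{\b}{2}\,x^T\big(\tfrac{1}{N_1+N_2}\Xi\Xi^T\big)x+O(1)$ on the sphere $\|x\|^2=R^2 N_1$, the subleading term being $O(1)$ since $\tfrac1{N_1+N_2}\sum_{i,j}\xi_{ij}^2 x_i^2$ is bounded by $\tfrac{R^2 N_1}{N_1+N_2}\max_i\tfrac1{N_2}\sum_j\xi_{ij}^2\le C R^2 N_1$ divided by... — more carefully, one keeps $\tfrac{N_2}{N_1+N_2}=1-\a+o(1)$ so that $-\b H^{HM}=\tfrac{\b(1-\a)}{2}\,x^T M x+o(N_1)$ with $M=\tfrac1{N_2}\Xi\Xi^T$ having limiting spectral law $\r_{MP}(\cdot;\a)$ by \textbf{P1}. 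Let $\l_1\ge\cdots\ge\l_{N_1}$ be the eigenvalues of $M$; in the rotated coordinates the integral over the sphere of radius $R\sqrt{N_1}$ becomes $\int_{\sum x_i^2=R^2N_1}\exp\big(\tfrac{\b(1-\a)}{2}\sum_i\l_i x_i^2\big)\,\s_{N_1,R}(dx)$.

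Next I would introduce the constraint through a Laplace/Fourier variable: using $\mathbf 1_{\{\|x\|^2=R^2N_1\}}$ inside the flat integral, one writes (up to subexponential prefactors coming from \eqref{eq:shells-surface} and the surface area of the sphere) $Z_{N_1,N_2,\b}\asymp \int_{c-i\infty}^{c+i\infty}\frac{dq}{2\pi i}\,e^{N_1 q R^2}\prod_{i=1}^{N_1}\big(2q-\b(1-\a)\l_i\big)^{-1/2}\cdot e^{o(N_1)}$, valid for $c$ to the right of $\tfrac{\b(1-\a)}{2}\l_+$ so that the Gaussian integrals converge. Taking logs and using \textbf{P1} to replace $\tfrac1{N_1}\sum_i\log(2q-\b(1-\a)\l_i)\to\int\r_{MP}(\l;\a)\log(2q-\b(1-\a)\l)\,d\l$ (the integral is finite because $\r_{MP}$ vanishes like a square root at $\l_+$, so $\log$ is integrable), the exponent becomes $N_1\big(qR^2-\tfrac12\int\r_{MP}(\l;\a)\log(2q-\b(1-\a)\l)\,d\l\big)+o(N_1)$. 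A standard steepest-descent/Laplace argument then gives $\tfrac1{N_1}\log Z\to \min_{2q\ge\b(1-\a)\l_+}\big(qR^2-\tfrac12\int\r_{MP}\log(2q-\b(1-\a)\l)\,d\l\big)$ plus the constant $-\log R-\tfrac12$ coming from the normalisation of the uniform measure on $S^{N_1}(R)$ (the surface area gives $\tfrac1{N_1}\log|S^{N_1}(R)|=\log R+\tfrac12\log(2\pi e)+o(1)$, and the $\sqrt{2\pi}^{N_1}$ factors from completing the squares cancel the $2\pi e$); multiplying by $\tfrac{N_1}{N_1+N_2}\to\a$ converts $\tfrac1{N_1}$-normalisation into $\tfrac1{N_1+N_2}$-normalisation and yields \eqref{eq:AHMsfer}. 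The convexity of $q\mapsto qR^2-\tfrac12\int\r_{MP}\log(2q-\b(1-\a)\l)\,d\l$ on $(\tfrac{\b(1-\a)}{2}\l_+,\infty)$ guarantees that the critical point is the unique minimiser, so $\min$ and the saddle agree.

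The main obstacles are analytic rather than conceptual. The first is justifying the passage from the finite-$N$ contour integral to its exponential asymptotics uniformly: one must show the prefactor $\prod_i(2q-\b(1-\a)\l_i)^{-1/2}$ along the vertical contour does not spoil the estimate, which requires controlling the smallest gap $2c-\b(1-\a)\l_1$, i.e. an upper bound on $\l_1=\|M\|_{op}$; here \textbf{P2} is exactly what is needed, giving $\l_1\le\l_+ + o(1)$ with overwhelming probability, so one can fix any $c>\tfrac{\b(1-\a)}{2}\l_+$ and proceed on that event, the complementary event being exponentially negligible by \eqref{eq:impo-spectrum} and the $L^1$-concentration of the free energy. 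The second obstacle is the interchange of the $N_1$-limit in $\tfrac1{N_1}\sum\log(2q-\b(1-\a)\l_i)$ with the optimisation over $q$ and with the almost-sure convergence in \textbf{P1}: since $\log(2q-\b(1-\a)\l)$ is, for $q$ in a compact subset of $(\tfrac{\b(1-\a)}{2}\l_+,\infty)$, bounded and continuous on $[\l_-,\l_+]$ (and the mass at $0$ when $\a>1/2$ contributes $(2-\a^{-1})^+\log 2q$, perfectly regular), weak convergence suffices, and equicontinuity in $q$ plus the boundary blow-up of the objective as $q\downarrow\tfrac{\b(1-\a)}{2}\l_+$ (where the $\log$ integral diverges to $-\infty$, hence the objective to $+\infty$) confines the minimiser to a compact set and legitimises exchanging $\lim_{N_1}$ and $\min_q$. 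The remaining bookkeeping — the $O(1)$ diagonal term, the $\e\to0$ shell correction \eqref{eq:shells-surface}, and the precise constants in $\log|S^{N_1}(R)|$ — is routine and I would not belabour it.
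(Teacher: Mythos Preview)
Your route via the Fourier/contour representation of the spherical constraint followed by steepest descent is the classical Kosterlitz--Thouless--Jones approach and is genuinely different from what the paper does. The paper stays entirely on the real line: for the upper bound it multiplies the integrand by $e^{q(R^2N_1-\|x\|^2)}=1$ on the sphere and then simply \emph{drops} the shell constraint, integrating the resulting Gaussian over all of $\R^{N_1}$; for the lower bound it writes that full-$\R^{N_1}$ Gaussian as (shell)$+$(complement of shell) and bounds the complement by an exponential Chebyshev inequality with an auxiliary parameter $\eta$, then shows that at the optimal $q$ the shell term dominates. No complex analysis, no branch points, just elementary inequalities. Your approach is more systematic and ties in directly with the random-matrix literature (it is essentially the method of \cite{baik1}, which the paper cites as the alternative derivation).

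There is, however, one concrete error in your sketch. The objective $q\mapsto qR^2-\tfrac12\int\r_{MP}(\l)\log(2q-\b(1-\a)\l)\,d\l$ does \emph{not} blow up as $2q\downarrow\b(1-\a)\l_+$. Precisely because $\r_{MP}$ vanishes like $\sqrt{\l_+-\l}$ at the right edge, both the log-integral and its $q$-derivative $\int\r_{MP}(\l)\,(2q-\b(1-\a)\l)^{-1}\,d\l$ stay finite at the boundary. Hence for $\b R^2$ large enough the minimiser sits exactly at $2\bar q=\b(1-\a)\l_+$, and your steepest-descent step must then cope with a saddle that coalesces with the rightmost branch point $\tfrac{\b(1-\a)}{2}\l_1$ of the finite-$N$ integrand. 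This is the condensation transition of the spherical model and is not covered by the ``standard'' Laplace argument you invoke; it can certainly be handled (again see \cite{baik1}), but your justification as written---relying on boundary blow-up to confine the minimiser to a compact interior set---only treats the high-temperature phase where the saddle lies strictly to the right of the spectral edge.
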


\begin{theorem}\label{Th:Bip-sfer}
Let $A_{N_1,N_2}$ denote the free energy of $RBM_{\s\s}$ and $\a\leq \frac 12$. It holds $P$-a.s.
\begin{eqnarray}
\lim_{N_1,N_2\to\infty} A_{N_1,N_2} &=&\min_{ab\geq \b^2(1-\a)\l_+}\left(\frac{R^2}{2}\left(\a a + (1-\a) b\right) -\frac 1 2 (1-2\a)\log(b)\right. \nonumber \\
 &-&\left.\frac \a2\int\r_{MP}(\l;\a)\log(ab-\b^2(1-\a)\l) d\l-\log R-\frac12\right) \label{eq:Bip-sfer}\,.
  %&=&\min_{ab\geq \b^2(1-\a)\l_+}\left(\frac{R^2}{2}\left(\a a + (1-\a) b\right) -\frac 1 2 (\a\log a + (1-\a)\log b)\right. \nonumber \\
% &-&\left.\frac \a2\int(\r_{MP}(\l;\a)-\d_0(\l))\log(ab-\b^2(1-\a)\l) d\l-\log R-\frac12\right) \quad P-a.s. \nonumber
\end{eqnarray}
\end{theorem}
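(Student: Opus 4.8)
The plan is to compute the free energy of the doubly spherical RBM directly by a variational argument, exploiting the linearity of the coupling between the two layers. First I would integrate out the $y$-layer: since $H^{RBM}$ is linear in $y$ and $y$ lies on the sphere $S^{N_2}(R)$ with uniform measure, the conditional partition function given $x$ is $\int_{S^{N_2}(R)}\sigma_{N_2,R}(dy)e^{\beta\langle \Xi^T x, y\rangle/\sqrt{N_1+N_2}}$. This is (up to normalisation) a Bessel-type spherical integral depending only on $\|\Xi^T x\|$, and by the standard Laplace/saddle-point analysis of the uniform measure on a high-dimensional sphere (equivalently, by a Gaussian disintegration and the HM-type argument used for Theorem \ref{Th:HM-sfer}) its logarithm is asymptotically $(N_1+N_2)$ times a convex function of $\frac{1}{N_1+N_2}\|\Xi^T x\|^2 = \frac{1}{N_1+N_2} x^T \Xi\Xi^T x$. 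Concretely one writes $\log\int_{S^{N_2}(R)}\sigma_{N_2,R}(dy)e^{t\langle v,y\rangle}$ via an auxiliary Gaussian representation and a Lagrange multiplier $b$ enforcing the spherical constraint, yielding a term $\frac{R^2}{2}(1-\alpha)b - \frac{1}{2}(1-2\alpha)\log b$ plus a quadratic form $\frac{\beta^2 (1-\alpha)}{2 (N_1+N_2)^2 b}\,x^T\Xi\Xi^T x$ in the exponent (the combinatorial factors matching \eqref{eq:Bip-sfer} after bookkeeping of the ratio $N_1/N_2$).

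Next I would treat the remaining $x$-integral. After the first step the effective Hamiltonian for $x\in S^{N_1}(R)$ is quadratic, $x^T M x$ with $M$ proportional to $\frac{1}{N_2}\Xi\Xi^T$, so I introduce a second Lagrange multiplier $a$ for the spherical constraint on $x$ and compute the resulting Gaussian integral, producing $\det(aI - \text{const}\cdot\frac{1}{N_2}\Xi\Xi^T)^{-1/2}$. By $\mathbf{P1)}$ the normalised log-determinant converges a.s.\ to $\int\rho_{MP}(\lambda;\alpha)\log(ab-\beta^2(1-\alpha)\lambda)\,d\lambda$ (with the $ab$ combination emerging once the two multipliers are combined, and $\mathbf{P2)}$ guaranteeing that the integrand stays bounded, i.e.\ that the saddle lies in the region $ab\ge\beta^2(1-\alpha)\lambda_+$ where the logarithm is real). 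Assembling the exponent gives, in the limit, the function of $(a,b)$ inside the bracket of \eqref{eq:Bip-sfer}, and the saddle-point / Laplace method turns the two auxiliary integrals into the $\min$ over $a,b$. The assumption $\alpha\le\frac12$ is exactly what makes the coefficient $-\frac12(1-2\alpha)$ of $\log b$ nonpositive so that the relevant extremum is a genuine minimum and the integral over $b$ concentrates.

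The two points requiring care, and which I expect to be the main obstacles, are: (1) making the exchange of the $R\to$ (shell thickness), thermodynamic, and saddle-point limits rigorous — this is where one needs the shell-regularity statements advertised for Sections \ref{section:sferico}–\ref{section:legendre}, namely that the model on a thin shell $S_{N,R,\e}$ converges, as $\e\to0$, to the model on the sphere, uniformly enough to justify interchanging with $N_1,N_2\to\infty$; and (2) upgrading the a.s.\ convergence of the empirical spectral distribution in $\mathbf{P1)}$ to convergence of $\frac{1}{N_1}\log\det(aI-c\,\frac{1}{N_2}\Xi\Xi^T)$ uniformly for $a,b$ in the relevant compact region, which is where $\mathbf{P2)}$ (operator-norm localisation of the spectrum, hence no mass escaping to infinity and a uniform lower bound on the argument of the log away from the constraint boundary) does the work. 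Once these uniformities are in hand, convexity of the bracketed function in $(a,b)$ — inherited from the convexity of $-\log$ and of the log-partition functions — ensures the saddle is unique and the $\min$ is attained, completing the identification with \eqref{eq:Bip-sfer}.
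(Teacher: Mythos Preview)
Your outline is essentially correct and lands on the same variational function $\bar A(a,b)$ as the paper, but the route and the handling of the lower bound differ in ways worth noting.

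\textbf{Route.} The paper does not marginalise sequentially. It first applies the singular value decomposition of $\Xi$ to write $H^{RBM}=\sqrt{1-\a}\sum_{i\le N_1}\s_i\tilde x_i\tilde y_i$, which decouples the interaction into $N_1$ two-dimensional blocks (and exposes $N_2-N_1$ free $y$-coordinates, the source of the $-\frac12(1-2\a)\log b$ term). Both Lagrange multipliers $a,b$ are then introduced \emph{simultaneously}, and the Gaussian integral is computed as a product of $2\times2$ determinants. Your sequential scheme --- integrate $y$ with multiplier $b$, get a quadratic in $x$ proportional to $x^T\Xi\Xi^T x$, then integrate $x$ with multiplier $a$ --- is algebraically equivalent (pulling a factor $b^{N_1/2}$ across the determinant recovers the paper's $ab-\b^2(1-\a)\l$), and has the mild advantage of not invoking the SVD. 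The paper's block form, on the other hand, makes the joint convexity in $(a,b)$ and the positivity condition $ab>\b^2(1-\a)\l_+$ more transparent.

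\textbf{Lower bound.} Here your proposal is looser than the paper. You speak of ``two auxiliary integrals'' and a saddle-point/Laplace evaluation, but in the paper $a,b$ are never genuine integration variables: they are free parameters, and the upper bound comes simply from dropping the spherical constraints. The matching lower bound is \emph{not} obtained by steepest descent; instead the paper works on shells of thickness $\e$, writes the shell partition function as the full Gaussian integral minus the integral over the shell complement, decomposes the complement into four regions $S_\e^{\kappa,j}$ ($\kappa=\pm1$, $j=1,2$), and shows via a perturbation parameter $\eta$ that each of the four complement contributions is dominated by $\bar A(\bar a,\bar b)$ precisely when $(\bar a,\bar b)=\arg\min\bar A$. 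The mechanism is that $\partial_\eta\Delta_\e^{\kappa,j}|_{\eta=0}=\e\mp\partial_a\bar A$ or $\e\mp\partial_b\bar A$, so non-negativity of all four forces the gradient to vanish. Your proposal correctly flags the shell regularity and the spectral localisation $\mathbf{P2)}$ as the key inputs, but to turn the sketch into a proof you would need to supply this complement estimate (or an equivalent large-deviation bound for the spherical measure) rather than appeal to a Laplace principle that, as written, does not directly apply.
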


The choice $\a\leq\frac12$ is just a matter of convenience as it will be clear that pre-choosing the largest layer simplifies a lot the notations. For $\alpha\geq \frac12$ one should bear in mind that the Marchenko-Pastur distribution (\ref{eq:MP}) has an atom in zero. 

First of all we prove that the spherical shells constitute a good approximation of the spherical prior. 
We recall that everywhere the quantities with $\widehat \cdot$ are always referred to spherical shell priors. 

\begin{lemma}\label{lemma-gusci}
Let $\widehat A_{N_1,N_2,\e}$ denote the free energy of $HM_{\s_\e}$, $RBM_{\s_\e\s}$ or $RBM_{\s_\e\s_\e}$. Then it is
$$
\lim_{\e\to0}\lim_{N_1,N_2} \widehat A_{N_1,N_2,\e}=\lim_{N_1,N_2}\lim_{\e\to0} \widehat A_{N_1,N_2,\e}\,. 
$$
\end{lemma}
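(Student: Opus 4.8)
The plan is to show that the two iterated limits exist and coincide by sandwiching the shell free energy $\widehat A_{N_1,N_2,\e}$ between quantities whose limits are under control. Fix the parameters $R,\b$ and work with the one-layer case $HM_{\s_\e}$ first; the bipartite cases are analogous modulo bookkeeping. The starting point is the elementary observation that the shell $S_{N,R,\e}$ is a disjoint union (up to null sets) of the spheres $S_N(R')$ with $R'\in[R-\e,R+\e]$, so by disintegrating the uniform measure on the shell into uniform measures on these spheres one gets
\be\label{eq:sandwich-setup}
\widehat Z_{N_1,N_2,\b,\e}=\frac{1}{|S_{N_1,R,\e}|}\int_{R-\e}^{R+\e}|S_{N_1}(R')|\,e^{(N_1+N_2)A_{N_1,N_2}(R')}\,dR'\,,
\ee
where $A_{N_1,N_2}(R')$ is the (finite-$N$) free energy of the genuinely spherical model $HM_\s$ at radius $R'$. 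Taking logs and using \eqref{eq:shells-surface} to see that the prefactor $|S_{N_1}(R')|/|S_{N_1,R,\e}|$ is at most $e^{o(N_1)}$ uniformly on the shell, one obtains the two-sided bound
\be\label{eq:two-sided}
\inf_{R'\in[R-\e,R+\e]}A_{N_1,N_2}(R')+o(1)\ \le\ \widehat A_{N_1,N_2,\e}\ \le\ \sup_{R'\in[R-\e,R+\e]}A_{N_1,N_2}(R')+o(1)\,,
\ee
the $o(1)$ being in $N_1,N_2$ and uniform in $\e$ small. This already makes the \emph{order} of the two iterated limits irrelevant provided the inner object is sufficiently regular, which is the content of the next two steps.

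Next I would send $N_1,N_2\to\infty$ first. By Theorem \ref{Th:HM-sfer} we know $A_{N_1,N_2}(R')\to A_\s(R',\b)$ $P$-a.s.\ for each fixed $R'$, and the explicit formula \eqref{eq:AHMsfer} shows $R'\mapsto A_\s(R',\b)$ is continuous (indeed the $R'$-dependence enters only through the smooth terms $q R'^2-\log R'$). To pass the supremum and infimum over the small interval through the limit, I would first establish an equicontinuity / monotonicity statement: from \eqref{eq:HM} the Hamiltonian is scale-covariant, $H(\lambda x)=\lambda^2 H(x)$, so rescaling a configuration from radius $R'$ to radius $R$ shows $A_{N_1,N_2}(R')=A_{N_1,N_2}(R)+\log(R'/R)+ (\text{a term controlled by } (R'/R)^2-1)$ — more precisely the free energy, as a function of $\log R'$ at fixed $\b R'^2$-type combination, is convex, hence Lipschitz on compact sets with a constant uniform in $N_1,N_2$. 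Then $\sup_{[R-\e,R+\e]}A_{N_1,N_2}(R')=A_{N_1,N_2}(R)+O(\e)$ with the $O(\e)$ uniform in $N$, and similarly for the inf. Plugging this into \eqref{eq:two-sided} and letting $N_1,N_2\to\infty$ gives $\lim_{N_1,N_2}\widehat A_{N_1,N_2,\e}=A_\s(R,\b)+O(\e)$; letting $\e\to0$ yields $\lim_{\e\to0}\lim_{N_1,N_2}\widehat A_{N_1,N_2,\e}=A_\s(R,\b)$. For the other order, \eqref{eq:two-sided} combined with the uniform-in-$N$ Lipschitz bound gives $|\widehat A_{N_1,N_2,\e}-A_{N_1,N_2}(R)|\le O(\e)+o_N(1)$, so $\lim_{\e\to0}\widehat A_{N_1,N_2,\e}=A_{N_1,N_2}(R)$ uniformly enough that $\lim_{N_1,N_2}\lim_{\e\to0}\widehat A_{N_1,N_2,\e}=\lim_{N_1,N_2}A_{N_1,N_2}(R)=A_\s(R,\b)$ as well. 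The two iterated limits agree.

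For $RBM_{\s_\e\s}$ and $RBM_{\s_\e\s_\e}$ the same scheme works: disintegrate the one or two shell priors into spheres, use \eqref{eq:shells-surface} on each factor, and use that $H^{RBM}$ in \eqref{eq:RBM} is bilinear, hence $H(\lambda_1 x,\lambda_2 y)=\lambda_1\lambda_2 H(x,y)$, which again yields a rescaling identity and a uniform-in-$N$ Lipschitz bound on $(R_1,R_2)\mapsto A_{N_1,N_2}(R_1,R_2)$ on compacts; Theorem \ref{Th:Bip-sfer} supplies the pointwise limit and its continuity. The main obstacle is the uniformity in $N_1,N_2$ of the modulus of continuity of $R'\mapsto A_{N_1,N_2}(R')$ near the fixed radius $R$ — this is what lets the $\sup/\inf$ over the shrinking shell be exchanged with the thermodynamic limit — and I expect the cleanest route to it is precisely the exact scale-covariance of the Hamiltonians noted above, which reduces the radius-dependence of the finite-$N$ free energy to explicit elementary functions plus a $\b R'^2$-reparametrisation, together with convexity of $\b\mapsto A_{N_1,N_2}(\b)$ inherited from Hölder's inequality. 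Everything else is Laplace's method and the surface-area asymptotics \eqref{eq:shells-surface}.
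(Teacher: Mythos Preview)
Your core mechanism is the paper's: use the scale-covariance $H(\lambda x)=\lambda^2 H(x)$ (resp.\ $H(\lambda x,\mu y)=\lambda\mu H(x,y)$) to turn a change of radius into a change of $\b$, and then invoke the uniform-in-$N$ Lipschitz continuity of $\b\mapsto A_{N_1,N_2}(\b)$ coming from convexity. The paper packages this more tightly: instead of the disintegration \eqref{eq:sandwich-setup} and the $\inf/\sup$ sandwich, it applies Fubini together with the mean value theorem to produce a single $R^*_\e$ in the shell with $\widehat A_{N_1,N_2,\e}=A^\s_{N_1,N_2}(R^*_\e;\b)=A^\s_{N_1,N_2}(R;c_{N,\e}\b)$, and then reads off $|\widehat A_{N_1,N_2,\e}-A^\s_{N_1,N_2}(R;\b)|\lesssim \e/N$ directly. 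Two remarks on your write-up: first, since $\s_{N,R}$ is a probability measure the rescaling gives exactly $A_{N_1,N_2}(R')=A_{N_1,N_2}(R,\b(R'/R)^2)$, with no $\log(R'/R)$ leftover; second, and more importantly, you should not invoke Theorems~\ref{Th:HM-sfer} and~\ref{Th:Bip-sfer} here, because in the paper those theorems \emph{use} Lemma~\ref{lemma-gusci} in their proofs --- the lemma is meant to be established first, with the existence of $\lim_{N_1,N_2}A^\s_{N_1,N_2}$ deferred (as the paper explicitly notes). Your uniform Lipschitz bound already suffices for the interchange without ever appealing to those theorems, so simply drop that appeal.
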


The existence of the limit in the r.h.s. will be proven below. 

\begin{proof}
In this proof we write $N$ to mean $N_1$ or $N_2$ and by $A^{\s}(R; \b)$ the free energy of a spherical model of radius $R$, according to the context. No further details are needed. By Fubini and the mean value theorem there is a $R^*_\e\in[R\sqrt N-\e,R\sqrt N+\e]$ for which
$$
\widehat A_{N_1,N_2}=A_{N_1,N_2}^{\s}(R^*_\e; \b)\,. 
$$
By a simple change of variables we have
$$
A_{N_1,N_2}^{\s}(R^*_\e; \b)=A_{N_1,N_2}^{\s}(R; c_{N,\e}\b)\,, 
$$
with $c_{N,\e}:=R^2_\e/R^2N$. Therefore by Lipschitz continuity of the free energy w.r.t. $\b$
$$
|A^\s_{N_1,N_2}-\widehat A_{N_1,N_2,\e}|\leq \frac\e N\,. 
$$
This, combined with $A^\s_{N_1,N_2}=\lim_{\e\to0} \widehat A_{N_1,N_2,\e}$, gives the assert.
\end{proof}

We first deal with the Hopfield model. Everywhere from now on partition function and pressure will be referred to this model, unless otherwise specified.

The best advantage of the spherical prior is that one can diagonalise the energy (\ref{eq:HM}):
$$
H(x)=-(1-\a)\sum_{i=1}^{N_1}\l_ix_i^2\,.
$$
Thanks to {\bf P2)}, we can restrict our analysis to disorder realisations with spectrum contained in $(-\infty,\l_+]$. More precisely 

\begin{lemma}\label{lemma:HM-sfer}
For any $a>\l_+$ there is a $c>0$ such that
\be\label{eq:lemma-HM-sfer}
E[A_{N_1,N_2}(\b,R)1_{\{\|\frac{\Xi\Xi^T}{N_2}\|_{op}>a\}}]=\OOO{e^{-cN_1}}\,. 
\ee
\end{lemma}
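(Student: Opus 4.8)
The plan is to bound the free energy on the bad event $\{\|\Xi\Xi^T/N_2\|_{op}>a\}$ by a deterministic quantity and then use the spectral concentration estimate {\bf P2)} to control the probability of that event. First I would recall that on the sphere $S^{N_1}(R)$ the Hamiltonian diagonalises as $H(x)=-(1-\a)\sum_i \l_i x_i^2$ with $\l_i=\l_i(\Xi\Xi^T/N_2)$, so that $-H(x)\leq (1-\a)\|\Xi\Xi^T/N_2\|_{op}\,\|x\|_2^2=(1-\a)\|\Xi\Xi^T/N_2\|_{op}\,R^2N_1$ on the support of $\s_{N_1,R}$ (using $\|x\|_2^2=R^2N_1$). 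Hence $Z_{N_1,N_2,\b}\leq e^{\b(1-\a)R^2N_1\,\|\Xi\Xi^T/N_2\|_{op}}$, which gives the pointwise bound
\be\label{eq:pw-bound-A}
A_{N_1,N_2}(\b,R)\leq \frac{\b(1-\a)R^2N_1}{N_1+N_2}\,\Big\|\tfrac{\Xi\Xi^T}{N_2}\Big\|_{op}\leq \b R^2\,\Big\|\tfrac{\Xi\Xi^T}{N_2}\Big\|_{op}\,.
\ee
We also need a lower bound to keep $A_{N_1,N_2}$ from being very negative on the bad event; since $-H(x)\geq 0$ on the sphere (all $\l_i\geq 0$), we have $Z_{N_1,N_2,\b}\geq 1$ and thus $A_{N_1,N_2}\geq 0$, so $A_{N_1,N_2}1_{\{\cdots\}}\geq 0$ and only the upper tail matters.

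Next I would estimate the expectation via the layer-cake / tail-integration formula applied to the nonnegative random variable $X:=\|\Xi\Xi^T/N_2\|_{op}$. Writing $B$ for the bad event and using \eqref{eq:pw-bound-A},
\be
E\big[A_{N_1,N_2}1_B\big]\leq \b R^2\, E\big[X\,1_{\{X>a\}}\big]=\b R^2\Big(a\,\P(X>a)+\int_a^\infty \P(X>t)\,dt\Big)\,.
\ee
Now apply {\bf P2)}: choosing $t$ so that $t+\sqrt{\l_+}-1$ is the threshold, \eqref{eq:impo-spectrum} reads $\P(X-1>s+\sqrt{\l_+}-1)\leq 2e^{-N_1 s^2/2}$, i.e. $\P(X>u)\leq 2e^{-N_1(u-\sqrt{\l_+})^2/2}$ for $u\geq \sqrt{\l_+}$. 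Since $a>\l_+\geq\sqrt{\l_+}$ (note $\l_+>1$, hence $\l_+>\sqrt{\l_+}$), for every $t\geq a$ we get $\P(X>t)\leq 2e^{-N_1(t-\sqrt{\l_+})^2/2}\leq 2e^{-N_1(a-\sqrt{\l_+})^2/2}e^{-N_1(t-a)(a-\sqrt{\l_+})}$ by convexity of $u\mapsto(u-\sqrt{\l_+})^2$. Plugging this in, the boundary term is $O(e^{-cN_1})$ with $c=(a-\sqrt{\l_+})^2/2>0$, and the integral $\int_a^\infty \P(X>t)\,dt\leq \frac{2}{N_1(a-\sqrt{\l_+})}e^{-N_1(a-\sqrt{\l_+})^2/2}$ is likewise $O(e^{-cN_1})$ (the extra polynomial factor only improves the bound). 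Combining, $E[A_{N_1,N_2}1_B]=O(e^{-cN_1})$, which is \eqref{eq:lemma-HM-sfer}.

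I do not expect a genuine obstacle here; the lemma is essentially an exercise in combining the deterministic energy bound on the sphere with the sub-Gaussian operator-norm tail. The only minor care needed is (a) checking the constant $c$ can be taken uniform in $N_1,N_2$ — it is, since $a-\sqrt{\l_+}$ depends only on $\a$ — and (b) making sure the prefactor $\b R^2 N_1/(N_1+N_2)$ stays bounded, which it does since it is at most $\b R^2$. If one wants the statement to hold with the same $c$ for the shell version $\widehat A$ or the spherical RBM, one repeats the argument: for $RBM_{\s\s}$ the Hamiltonian on $S^{N_1}(R_1)\times S^{N_2}(R_2)$ is bounded by $\b\sqrt{N_1N_2}\,\|\Xi\|_{op}R_1R_2$ and $\|\Xi\|_{op}^2=N_2\|\Xi\Xi^T/N_2\|_{op}$, so the same {\bf P2)} tail applies; I would remark this in passing rather than redo the computation.
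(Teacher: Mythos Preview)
Your argument is correct. The paper omits the proof here, pointing to \cite{legendre}; from the parallel RBM lemma proved just afterwards one infers that the intended method is to first secure a uniform (annealed) bound on the free energy and then apply Cauchy--Schwarz, $E[A\,1_B]\leq (E[A^2])^{1/2}P(B)^{1/2}$, invoking {\bf P2)} on the second factor. Your route is more direct: you bound $A$ pointwise by $\b R^2\|\Xi\Xi^T/N_2\|_{op}$ via the diagonalised Hamiltonian and the sphere constraint $\|x\|^2=R^2N_1$, use $A\geq 0$ to discard the lower tail, and then integrate the operator-norm tail by the layer-cake formula. This sidesteps the annealed computation entirely and even yields the sharper exponent $c=(a-\sqrt{\l_+})^2/2$ rather than half of it. One cosmetic caveat: the inequality $-H\geq 0$ you use holds for the diagonalised form $H=-(1-\a)\sum_i\l_i x_i^2$ adopted in the paper; for the Hamiltonian (\ref{eq:HM}) with the strict constraint $i<k$ there is a diagonal correction, but on the sphere it shifts $A$ by a deterministically bounded amount, so the conclusion is unaffected.
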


The proof of that follows essentially the same lines of \cite{legendre}. We omit here the details. Also, the next proof is a straightforward adaption of \cite[Proof of (9)]{legendre}. We give it completely in order to introduce the argument used to prove Theorem \ref{Th:Bip-sfer}. 

\begin{proof}[Proof of Theorem \ref{Th:HM-sfer}]
%\textcolor{red}{First note that $A_N$ is a convex and Lipschitz function of $\l_1,\ldots,\l_N$. Therefore provided the limit of the quenched free energy exists, we obtain also $\P$-a.s. convergence of $A_N\to \lim_N\E[A_N]$ by the usual procedure. Thus we focus on $\E[A_N]$.} 
Let $2q>\b(1-\a)\l_+$. Then using (\ref{eq:shells-surface}) we have 
\bea
\e \hat{Z}_{N_1,N_2,\e}(\b,\a,R)&=& \e\ e^{q R^2 N_1}\int_{\R^{N_1}}\s_{N_1,R,\e}(dz)e^{-\sum_i^{N_1}(q-\frac{\b(1-\a)}{2}\l_i)z_i^2}\label{eq:reprZ-Hop}\\
&\leq&e^{q R^2 N_1}\frac{(2\pi)^{\frac {N_1}2}}{|S_{R\sqrt{N_1}}|}\int_{\R^N}\frac{dz}{(2\pi)^{\frac {N_1}2}} e^{-\sum_i^{N_1}(2q-\b(1-\a)\l_i)z_i^2/2}\nn\\
&=&e^{q R^2 N_1}\frac{(2\pi)^{\frac {N_1}2}}{|S_{R\sqrt{N}}|}e^{-\frac 12\sum_i^N\log(2q-\b(1-\a)\l_i)}\nn\,,
\eea
therefore, since $\frac 1{N_1} \log \left(|S_{R\sqrt{N_1}}| /\sqrt{2\pi}^{N_1}  \right) \to \log R +\frac 1 2$ and thanks to Lemma \ref{lemma-gusci} 
$$
\lim\sup_{N_1,N_2}\frac1{N_1+N_2} \log Z_{N_1,N_2}(\b,\a,R)\leq \a q R^2 -\frac \a2\int\r_{MP}(\l;\a)\log(2q-\b(1-\a)\l)-\a\log R-\frac\a2=:\tilde A(q)\,,
$$
whence, as $\tilde A(q)$ is continuous, 
$$
\lim\sup_{N_1,N_2}\frac{1}{N_1+N_2} \log Z_{N_1,N_2}(\b,\a,R)\leq\min_{2q\geq\b(1-\a)\l_+}\tilde A(q).
$$ 
Moreover for $2q>\b(1-\a)\l_+$
$$
\partial_q^2 \tilde A(q)=2\a\int d\l\frac{\r_{MP}(\l;\a)}{(2q-\b(1-\a)\l)^2}>0,
$$
thus $\tilde A(q)$ is  convex and the minimum is attained in a unique point $\bar q$. 
 
\ni Now the reverse bound. Let $\e>0$ and $S^c_{N_1,R,\e}$ the complementary set of the shell $S_{N_1,R,\e}$. It holds
$$
\e \widehat Z_{N_1,N_2,\e}=e^{ q R^2 N_1}\frac{(2\pi)^{\frac {N_1}{2}}}{|S_{R\sqrt{N_1}}|}\int_{\R^{N_1}}\frac{dz}{(2\pi)^{\frac {N_1}{2}}} e^{-\sum_i^{N_1}(q-\frac{\b(1-\a)}{2}\l_i)z_i^2}-e^{ q R^2 N_1}\frac{(2\pi)^{\frac N2}}{|S_{R\sqrt{N_1}}|}\int_{S_{N_1,R,\e}^c}\frac{dz}{(2\pi)^{\frac {N_1}2}} e^{-\sum_i^{N_1}(q-\frac{\b(1-\a)}{2}\l_i)z_i^2}.
$$
For any $\eta>0$ small enough we have
\bea
\int_{S_{N_1,R,\e}^c}\frac{dz}{(2\pi)^{\frac {N_1}2}} e^{-\sum_i^{N_1}(q-\frac{\b(1-\a)}{2}\l_i)z_i^2}&\leq&\exp\left[N_1\left(\eta\left(R^2-\frac\e {N_1}\right)-\frac{1}{2N_1}\sum_j\log(2q-\b\l_j+2\eta)\right)\right]\nn\\
&+&\exp\left[N_1\left(-\eta\left(R^2+\frac\e {N_1}\right)-\frac{1}{2N_1}\sum_j\log(2q-\b\l_j-2\eta)\right)\right]\nn.
\eea
\ni Note now that the r.h.s. is $o(e^{-N})$ if $\frac\e N\to\infty$ as $N\to\infty$. So the greatest contribution is at the scale $\e=\tilde\e N$, $\tilde\e>0$ independent on $N_1$. Therefore
\be\label{eq:max( , , )}
\lim\inf_{N_1,N_2}\widehat A_{N_1,N_2,\tilde\e}\geq\max\left(\tilde A(q), A^1_{\tilde\e}(\eta; q), A^2_{\tilde\e}(\eta; q)\right),
\ee
with
\bea
A^1_{\tilde\e}(\eta; q)&=&\a(q+\eta)R^2-\a\tilde\e\eta-\frac\a2\int d\l\r_{MP}(\l;\a)\log(2(q+\eta)-\b\l)-\a\log R-\frac\a2;\\
A^2_{\tilde\e}(\eta; q)&=&\a(q-\eta)R^2-\a\tilde\e\eta-\frac\a2\int d\l\r_{MP}(\l;\a)\log(2(q-\eta)-\b\l)-\a\log R-\frac\a2.
\eea
Now we show that if $\bar q$ is the unique minimiser of $\tilde A(q)$ it is for $\eta$ small enough
$$
\tilde A(\bar q)=\max\left(\tilde A(\bar q), A^1_{\tilde\e}(\eta; \bar q), A^2_{\tilde\e}(\eta; \bar q)\right)\,,
$$
which will conclude the proof.

To do so we introduce
\bea
\D_-(q;\eta)&:=&\tilde A(q)-A^1_{\tilde\e}(\eta; q)= -\a\eta(R^2-\tilde\e)+\frac\a2\int d\l\r_{MP}(\l;\a)\log\left(\frac{2(q+\eta)-\b\l}{2q-\b\l}\right);\\
\D_+(q;\eta)&:=&\tilde A(q)-A^2_{\tilde\e}(\eta; q)= \a\eta(R^2+\tilde\e)+\frac\a2\int d\l\r_{MP}(\l;\a)\log\left(\frac{2(q-\eta)-\b\l}{2q-\b\l}\right)\,.
\eea
\ni As a function of $\eta$, $\D_\pm(q;\eta)$ are continuous and differentiable, vanishing in $\eta=0$ and with $\lim_{\eta\to+\infty}\D_\pm(q,\eta)=\pm\infty$. Moreover $\D_+(q;\eta)$ is uniformly convex and $\D_-(q;\eta)$ uniformly concave. Thus $\D_-(q;\eta)$ assumes a positive maximum iff the derivative in $\eta=0$ is positive, that is
\be\label{eq:upper-bound-epsi1}
0<-(R^2-\tilde\e)+\int d\l\frac{\r_{MP}(\l;\a)}{2q-\b\l}=\tilde \e-\partial_q \tilde A(q)\,.
\ee

\ni Likewise for $\D_+(q;\eta)$ is always positive iff 
$\left.\partial_\eta \D_+(q;\eta)\right|_{\eta=0}\geq0$, \ie
\be\label{eq:upper-bound-epsi2}
0\leq(R^2+\tilde\e)+\int d\l\frac{\r_{MP}(\l;\a)}{2q-\b\l}=\tilde \e+\partial_q \tilde A(q).
\ee
Combining (\ref{eq:upper-bound-epsi1}) and (\ref{eq:upper-bound-epsi2}) we get
$$
-\tilde\e\leq \left.\partial_q \tilde A(q)\right|_{q=\tilde q}<\tilde\e\,,,
$$
that is $\tilde q$ is the unique stationary point of $\tilde A(q)$. With this choice of $q$, relation (\ref{eq:max( , , )}) gives
\be
\lim\inf_{N_1,N_2} \widehat A_{N_1,N_2.\tilde \e}\geq \min_{q\geq\b\l_+}\tilde A(q) \,.
\ee
As $\tilde\e$ can be taken arbitrarily small, we recover (\ref{eq:AHMsfer}).
\end{proof}

Now we pass to compute the free energy of $RBM_{\s\s}$ by adapting the same method as before. Until the end of this section, Hamiltonian, partition function etc will be referred to the $RBM_{\s\s}$.
First step is to use the singular value decomposition of the matrix $\{\xi_{ij}\}$ to write
\be\label{eq:diagRBM}
\frac{1}{\sqrt{N_1+N_2}}(x,\xi y)=\sqrt{1-\a}\sum_{i\in[N_1]}\s_i\tilde x_i\tilde y_i\,,
\ee
where $\tilde x,\tilde y$ are related respectively to $x,y$ by an orthogonal transformation. Note that this decomposition removes automatically $N_2-N_1$ cyclic coordinates of the second layer. 

First of all we show that the part of the spectrum which falls away the support of the Marchenko-Pastur law is negligible for the free energy. The proof is adapted from \cite{legendre} and we will stress only the most salient points of it. 
\begin{lemma}
There exists $c,C>0$ so that $$E[A_{N_1,N_2}1_{\{\max_{i\in[N_1]}\s_i>\bar\s\}}]\leq Ce^{-cN_1}\,,$$ where
$\bar \s:= 1+\sqrt{\frac{\a}{1+\a}}$. 
\end{lemma}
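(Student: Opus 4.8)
\noindent\emph{Proof proposal.} The plan is to mimic the proof of Lemma~\ref{lemma:HM-sfer} (itself adapted from \cite[proof of (9)]{legendre}), now using the singular value decomposition (\ref{eq:diagRBM}) and property~{\bf P2)}. I would begin by establishing a crude \emph{deterministic} bound on the free energy. Using (\ref{eq:diagRBM}), Cauchy--Schwarz and the identities $\|\tilde x\|=\|x\|=R\sqrt{N_1}$, $\|\tilde y\|=\|y\|=R\sqrt{N_2}$, valid on the support of $\mu^2_{N_1,N_2}=\s_{N_1,R}\otimes\s_{N_2,R}$, one has
$$
|H(x,y)|=\sqrt{1-\a}\,\Big|\sum_{i\in[N_1]}\s_i\,\tilde x_i\tilde y_i\Big|\le \sqrt{1-\a}\,\big(\max_{i\in[N_1]}\s_i\big)\,R^2\sqrt{N_1N_2}\,.
$$
Since $\mu^2_{N_1,N_2}$ is a probability measure, $|\log Z_{N_1,N_2}|\le\b\max_{x,y}|H(x,y)|$, and using $\sqrt{N_1N_2}\le (N_1+N_2)/2$ and $\sqrt{1-\a}\le 1$ this gives the pointwise estimate
$$
|A_{N_1,N_2}|\le \frac{\b R^2}{2}\,\max_{i\in[N_1]}\s_i\,.
$$

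Next I would split the expectation by Cauchy--Schwarz:
$$
E\big[A_{N_1,N_2}\,1_{\{\max_{i\in[N_1]}\s_i>\bar\s\}}\big]\le \big(E[A_{N_1,N_2}^2]\big)^{1/2}\big(P(\max_{i\in[N_1]}\s_i>\bar\s)\big)^{1/2}\,.
$$
By the pointwise estimate and the fact that, by (\ref{eq:diagRBM}), the $\s_i$ are the singular values of $\tfrac1{\sqrt{N_2}}\Xi$, so that $\max_{i\in[N_1]}\s_i^2=\|\tfrac1{N_2}\Xi\Xi^T\|_{op}$, one bounds $E[A_{N_1,N_2}^2]\le \tfrac{\b^2R^4}{4}\,E[\|\tfrac1{N_2}\Xi\Xi^T\|_{op}]$; integrating the tail bound {\bf P2)} shows that the last expectation is bounded uniformly in $N_1,N_2$, say $E[\|\tfrac1{N_2}\Xi\Xi^T\|_{op}]\le \sqrt{\l_+}+\sqrt{2\pi}$. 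For the probability, the event $\{\max_{i\in[N_1]}\s_i>\bar\s\}$ forces $\|\tfrac1{N_2}\Xi\Xi^T-\mathbb I\|_{op}>\bar\s^2-1$; writing $\bar\s^2-1=(\bar\s^2-\sqrt{\l_+})+(\sqrt{\l_+}-1)$ and checking that $\bar\s^2-\sqrt{\l_+}>0$ for $\a\le\tfrac12$ (a short elementary inequality), property~{\bf P2)} with $t=\bar\s^2-\sqrt{\l_+}$ gives $P(\max_{i\in[N_1]}\s_i>\bar\s)\le 2e^{-N_1(\bar\s^2-\sqrt{\l_+})^2/2}$. Combining the two estimates yields the claim with $c=\tfrac14(\bar\s^2-\sqrt{\l_+})^2>0$ and a constant $C$ depending only on $\b,R,\a$.

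I do not expect any genuinely hard step here; the two points deserving (minor) care are the deterministic bound $|A_{N_1,N_2}|\le \tfrac{\b R^2}{2}\max_{i\in[N_1]}\s_i$, which is what makes the ``$L^2$ times probability'' splitting close, and the verification that the cut-off $\bar\s$ sits above the Marchenko--Pastur edge in the sense $\bar\s^2>\sqrt{\l_+}$, so that {\bf P2)} delivers an exponentially small probability. The remaining computations are essentially verbatim as in Lemma~\ref{lemma:HM-sfer} and \cite[proof of (9)]{legendre}, so, in line with the paper's style, I would only stress these two points and omit the routine details.
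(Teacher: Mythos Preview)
Your proof is correct and follows the same overall Cauchy--Schwarz strategy as the paper: split $E[A_{N_1,N_2}1_{\{\max_i\s_i>\bar\s\}}]$ into an $L^2$ factor and a probability factor, and kill the latter with {\bf P2)}. The difference lies only in how the $L^2$ factor is controlled. The paper first derives an annealed bound on the free energy via a two--case argument (bounding $E[e^{-\b H}]$ separately on $\{\max_i\s_i\le a\}$ and on its complement, the latter using a Gaussian linearisation from \cite[Proposition~1]{legendre}), and then imports the mechanism of \cite[Lemma~1]{legendre} to reach the final estimate. You bypass all of this with the pointwise deterministic bound $|A_{N_1,N_2}|\le\tfrac{\b R^2}{2}\max_i\s_i$, which immediately gives $E[A_{N_1,N_2}^2]\le\tfrac{\b^2R^4}{4}E[\|\tfrac1{N_2}\Xi\Xi^T\|_{op}]$ and is then closed by integrating the tail in {\bf P2)}. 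Your route is more elementary and self--contained (no external lemmas needed); the paper's detour through the annealed bound is slightly more informative but not actually needed for this lemma. The verification $\bar\s^2>\sqrt{\l_+}$ for $\a\le\tfrac12$ is indeed an elementary inequality, and the rest is routine.
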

\begin{proof}
For any $a>\bar\s$ we compute
$$
E[e^{\a\b\sum_{i\in[N_1]}\s_i\tilde x_i\tilde y_i}]=E[e^{\b\sum_{i\in[N_1]}\s_i\tilde x_i\tilde y_i}1_{\{\max_{i\in[N_1]}\s_i\leq a\}}]+E[e^{\b\sum_{i\in[N_1]}\s_i\tilde x_i\tilde y_i}1_{\{\max_{i\in[N_1]}\s_i>a\}}]\,.
$$
The first summand is easily estimated by
\be\label{eq:bound1}
E[e^{\b\sum_{i\in[N_1]}\s_i\tilde x_i\tilde y_i}1_{\{\max_{i\in[N_1]}\s_i\leq a\}}]\leq E[e^{\b a\sum_{i\in[N_1]}\tilde x_i\tilde y_i}]\leq E[e^{\b a \|\tilde x\|\|\tilde y\|}]\,. 
\ee
Arguing as in the proof of \cite[Proposition 1]{legendre} we can write for a $C>0$
\be\label{eq:bound2}
E[e^{\b\sum_{i\in[N_1]}\s_i\tilde x_i\tilde y_i}1_{\{\max_{i\in[N_1]}\s_i>a\}}]\leq Ca\sqrt{2\pi (N_1+N_2)}e^{\frac{\b^2}{2(N_1+N_2)}\|\tilde x\|^2\|\tilde y\|^2}\,.
\ee
Then (\ref{eq:bound1}) and (\ref{eq:bound2}) give an annealed bound for the free energy:
\be
\lim\sup_{N_1,N_2} A_N\leq \max\left(\bar\s\b\sqrt{\a(1-\a)}R^2\,,\,\,\frac12 \b^2\a(1-\a)R^2_1R^2_2 \right)<\infty\,. 
\ee

With this bound at hand, we repeat mutatis mutandis the steps of the proof of \cite[Lemma 1]{legendre} to get
\be
E[A_{N_1,N_2}1_{\{\max_{i\in[N_1]}\s_i>\bar\s\}}]\leq\sqrt{\a(1-\a)\b^2R^4+a^2}\sqrt{P(\|\Xi\Xi^T/N_2-\mathbb I\|_{op}\geq t+\sqrt{\l_+}-1)}
\ee
and the assert follows from (\ref{eq:impo-spectrum}). 
\end{proof}

\begin{proof}[Proof of Theorem \ref{Th:Bip-sfer}]

Now  start from (\ref{eq:diagRBM}) and write for any $a,b$ such that $ab>\b^2(1-\a)\l_+$
\bea
Z_{N_1,N_2}&=&\int\s_1(dx)\s_2(dy) \exp\left( \sqrt{1-\a} \b\sum_{i\in[N_1]}x_iy_i\s_i-\frac a2 \|x\|^2-\frac b2 \|y\|^2+\frac a2 R^2N_1+\frac b2 R^2N_2\right)\nn\\
&=&\exp\left(\frac {R^2}{2} (aN_1+b N_2)\right) \int\s_1(dx)\s_2(dy)\exp\left( -\frac 1 2 \sum_{i=1}^{N_2} (z_i^T,M_i z_i)\right)\label{eq:repr-Z-RBM}\,,
\eea
where we have defined 
$$
z^T_i := \begin{cases}
(x_i,y_i)&i=1,\ldots , N_1\,,\\
(0,y_i)&i=N_1+1,\ldots , N_2
\end{cases}$$
and the $2\times 2$ symmetric positively defined matrix 
$$
M_i :=
  \left( {\begin{array}{cc}
   a & -\b\sqrt{1-\a}\s_i \\
   -\b\sqrt{1-\a}\s_i & b \\
\end{array} } \right)\,,\quad i\in[N_1]\qquad M_i:=\diag(0,b)\,,\quad i=N_1+1,\ldots , N_2\,.
$$ 
Thus by (\ref{eq:shells-surface})
\bea
\e^2\hat{Z}_{N_1,N_2,\e}&\leq& \exp\left(\frac {R^2}{2} (aN_1+b N_2)\right) \frac{(2\pi)^{\frac{N_1}{2}} (2\pi)^{\frac{N_2}{2}} }{S^R_{N_1} S^R_{N_2}} \int \frac{dx dy}{\sqrt{2\pi}^N}\exp\left( -\frac 12\sum_{i=1}^{N_2} (z_i^T,M_i z_i)\right),\nn\\ 
%&\leq&\exp\left(\frac {R^2}{2} (aN_1+b N_2)\right) \frac{(2\pi)^{\frac{N_1}{2}} (2\pi)^{\frac{N_2}{2}} }{S^R_{N_1} S^R_{N_2}} b^{-\frac{N_2-N_1}{2}} \prod_{i=1}^{N_1}\det(M_i)^{-\frac 12}\nonumber \\
&=& \exp\left(\frac {R^2}{2} (aN_1+b N_2)\right) \frac{(2\pi)^{\frac{N_1}{2}} (2\pi)^{\frac{N_2}{2}} }{S^R_{N_1} S^R_{N_2}} b^{-\frac{N_2-N_1}{2}} \prod_{i=1}^{N_1}(ab-\b^2(1-\a)\s^2_i)^{-\frac 12}\,.
\eea
%Therefore for $ab>(1-\a)\b^2\l_+$ 
We introduce
\be\label{eq:RBM-trial}
\bar A(a,b):= \frac {R^2}{2} (a\a+b(1-\a))-\log R-\frac12-\frac{1-2\a}{2}\log b -\frac{\a}{2}\int \r_{MP}(d\l;\a)\log(ab-(1-\a)\b^2\l)\,.
\ee
and 
$$
\lim\sup_{N_1,N_2}A_{N_1,N_2}\leq \bar A(a,b)\,,\quad \forall a,b>0\,:\,\,ab>(1-\a)\b^2\l_+\,. 
$$
A direct inspection shows that $A(a,b)$ is jointly uniformly convex for any $\a\in[0,1/2]$ if $ab>(1-\a)\b^2\l_+$, therefore
\be\label{eq:upperbound.RBM}
\lim\sup_{N_1,N_2}A_{N_1,N_2}\leq \min_{ab>(1-\a)\b^2\l_+} \bar A(a,b)\,. 
\ee

We record for later use the gradient coordinates 
\bea
\partial_a \bar A(a,b)&=&\frac{\a R^2}{2}-\frac\a2\int \r_{MP}(d\l;\a)\frac{b}{ab- (1-\a)\b^2\l}\,;\label{eq:grad1RBM}\\
\partial_b \bar A(a,b)&=&\frac{(1-\a)R^2}{2} -\frac{1-2\a}{2b}-\frac\a2\int \r_{MP}(d\l;\a)\frac{a}{ab- (1-\a)\b^2\l}\,.\label{eq:grad2RBM}
\eea

%\pecetta{ NON CREDO SIA UTILE:
%We notice that the unique minimum is attained for
%\be\label{eq:argmin-ab}
%\bar a=\frac{X(\bar a,\bar b)}{R^2}\,,\qquad \bar b=\frac{1-2\a+\a X(\bar a,\bar b)}{(1-\a)R^2}\,,
%\ee
%where
%$$
%X(a,b):=\int \r_{MP}(d\l;\a)\frac{ab}{ab- (1-\a)\b^2\l}\,.
%$$
%If $\a=\frac12$ then $\bar a=\bar b=X(\bar a,\bar a)/R^2$.
%}

For the reverse bound consider again  spherical shells around $S_{R,N_1}$ and $S_{R,N_2}$ of thickness $\e$, that we name respectively $S_{1,\e}$ and $S_{2,\e}$. We split $S_{1,\e}=\R^{N_1}\setminus S^c_{1,\e}$ and $S_{2,\e}=\R^{N_2}\setminus S^c_{2,\e}$ and set $S_{\e}:=S_{1,\e}\times S_{2,\e}$. We have

\bea
\e^2 \widehat Z_{N_1,N_2,\e}&:=&\e^2\int\s_{N_1, R,\e}(dx)\s_{N_2,R,\e}(dy)e^{-\b H_N(x,y)}\nn\\
&=&\exp\left(\frac {R^2}{2} (aN_1+b N_2)\right) \int_{\R^{N_1}\times \R^{N_2}} \frac{dxdy}{|S_1||S_2|}\exp\left( -\frac 1 2 \sum_{i=1}^{N_2} (z_i^T,M_i z_i)\right)\nn\\
&-&\exp\left(\frac {R^2}{2} (aN_1+b N_2)\right) \int_{S^c_{\e}} \frac{dxdy}{|S_1||S_2|}\exp\left( -\frac 1 2 \sum_{i=1}^{N_2} (z_i^T,M_i z_i)\right)\label{eq:II-termRBM}\,,
\eea
where we used again the representation (\ref{eq:repr-Z-RBM}). The free energy associated to the first summand was already computed above in the thermodynamic limit. Therefore we have to upper bound the second summand. We consider four contributions according to the following decomposition. Let $\kappa\in\{-1,1\}$ and put
\bea
S_\e^{\kappa,1}&:=&\{x\in\R^{N_1}\,, y\in\R^{N_2}\,:\,\, \kappa(\|x\|^2-R^2N_1)\geq \e N_1\,\}\,,\nn\\
S_\e^{\kappa,2}&:=&\{x\in\R^{N_1}\,, y\in\R^{N_2}\,:\,\, \kappa(\|y\|^2-R^2N_2)\geq \e N_2\,\}\,.\nn 
\eea
Thus $S_\e^c=\bigcup_{\kappa\in\{-1,1\},j\in\{1,2\}}S_\e^{\kappa,j}$. Moreover we pick $\eta>0$ small enough and set
\bea
M^{(\kappa,j)}_i(\eta) &:=&
  \left( {\begin{array}{cc}
   a+(2-j)\kappa\eta & -\b\sqrt{1-\a}\s_i \\
   -\b\sqrt{1-\a}\s_i & b+(j-1)\kappa\eta \\
\end{array} } \right)\,,\quad i\in[N_1]\nn\\ 
M^{(\kappa,j)}_i(\eta)&:=&\diag(0,b+(j-1)\kappa\eta)\,,\quad i=N_1+1,\ldots , N_2\,.\nn
\eea

% and with an obvious meaning of the symbols
%$$-
%(\ref{eq:II-termRBM})=-\sum_{\kappa_1,\kappa_2} Z_{N_1,N_2}^{\e,\kappa_1,\kappa_2}\,,
%$$
%\pecetta{Direi che $S^c_{\e}\neq S^c_{1,\e}\times S^c_{2,\e}$ ma piuttosto $S^c_{\e} = S^c_{1,\e}\times \mathbb{R}^{N_2}  \cup   \mathbb{R}^{N_1} \times S^c_{2,\e}$. Non dovremmo maggiorare $Z^c$ con $Z^{\kappa,1}+Z^{\kappa,2}$ definite su 
%$$
%S_\e^{\kappa,1}:=\{x\in\R^{N_1}\,, y\in\R^{N_2}\,:\,\, \kappa(\|x\|^2-R^2N_1)\geq \e N_1\,\}\,,
%S_\e^{\kappa,2}:=\{x\in\R^{N_1}\,, y\in\R^{N_2}\,:\,\, \kappa(\|y\|^2-R^2N_2)\geq \e N_2\,\}\,. 
%$$
%Se ti torna mi sembra che si ripeta tutto con poche modifiche tipo usare
%$$
%M^{\kappa,1}_i(\eta) = M^{(\kappa,0)}_i(\eta), \ \ M^{\kappa,2}_i(\eta)=M^{(0,\kappa)}_i(\eta).
%$$
%Ognuna di queste ha $\eta$ in un solo partito quindi dovrebbero funzionare i conti del caso Hopfield
%  }

%Now we pick $\eta>0$ small enough and set
%\bea
%M^{(\kappa_1,\kappa_2)}_i(\eta) &:=&
%  \left( {\begin{array}{cc}
%   a+\kappa_1\eta & -\b\sqrt{1-\a}\s_i \\
%   -\b\sqrt{1-\a}\s_i & b+\kappa_2\eta \\
%\end{array} } \right)\,,\quad i\in[N_1]\nn\\ 
%M^{(\kappa_1,\kappa_2)}_i(\eta)&:=&\diag(0,b+\kappa_2\eta)\,,\quad i=N_1+1,\ldots , N_2\,.
%\eea

Also, we put
\bea
Z_{N_1,N_2,\e}^{\kappa,j}&:=& \exp\left(\frac {R^2}{2} (aN_1+b N_2)+\frac{(2-j)N_1\kappa\eta a}{2}+\frac{(j-1)N_2\kappa\eta b}{2}-\eta \e N_j\right)\nn\\
&&\int_{S_\e^{\kappa,j}} \frac{dxdy}{|S_1||S_2|}\exp\left( -\frac 1 2 \sum_{i=1}^{N_2} (z_i^T,M^{(\kappa,j)}_i(\eta) z_i)\right)\nn\\
\eea
so that 
$$
(\ref{eq:II-termRBM})=-\sum_{\kappa\in\{-1,1\},j=1,2} Z_{N_1,N_2,\e}^{\kappa,j}\,. 
$$
%$$
%\exp\left(\frac {R^2}{2} (aN_1+b N_2)+\frac{N_1\kappa_1\eta a}{2}+\frac{N_2\kappa_2\eta b}{2}-\eta \e N\right)\prod_{i=1}^{N_2}\det M^{(\kappa_1,\kappa_2)}_i(\eta)\,.
%$$
We conclude that for any $\eta>0$ sufficiently small and $a,b$ with $ab>(1-\a)\b^2\l_+$ 
\be
\lim\sup_{N_1,N_2} \widehat A_{N_1,N_2,\e}\geq \max(\bar A(a,b), \{A_\e^{\kappa, j}(\eta; a,b)\}_{\kappa\in\{-1,1\},j\in\{1,2\}})\,,
\ee
where 
\bea
A_\e^{\kappa,j}(\eta; a,b)&:=&-\eta\e+\frac12\a a(R^2+(2-j)\kappa\eta)+\frac12(1-\a) b(R^2+(j-1)\kappa\eta)\nn\\
&-&\log R -\frac12-\frac{1-2\a}{2}\log (b+(j-1)\kappa\eta)\nn\\
&-&\frac\a2 \int \r_{MP}(d\l;\a)\log((a+(2-j)\kappa\eta)(b+(j-1)\kappa\eta)- (1-\a)\b^2\l)\,. \nn
\eea

In analogy with the proof of Theorem \ref{Th:HM-sfer} we define 
\bea
\D_{\e}^{\kappa,j}(\eta)&:=&\bar A(a,b)-A_\e^{\kappa,j}(\eta; a,b)\nn\\
&=&\eta \e-(2-j)\kappa\frac{\eta\a a}{2}-(j-1)\kappa\frac{\eta(1-\a) b}{2}+\frac{1-2\a}{2}\log(b+(j-1)\kappa\eta)\nn\\
&+&\frac\a2 \int \r_{MP}(d\l;\a)\log\left(\frac{(a+(2-j)\kappa\eta)(b+(j-1)\kappa\eta)- (1-\a)\b^2\l}{ab-(1-\a)\b^2\l}\right)\,.\nn
\eea
We need to show that $\D_{\e}^{\kappa,j}(\eta)\geq0$ for $\eta>0$ small. As before, to do so it suffices to prove the derivative in the origin to be non-negative uniformly in $\e>0$. Bearing in mind (\ref{eq:grad1RBM}), (\ref{eq:grad2RBM}) we have
\bea
\frac{d}{d\eta} \D_{\e}^{\kappa,j}(\eta)\big|_{\eta=0}&=&\e+(2-j)\kappa\left(-\frac{a\a R^2}{2}+\frac\a2 \int \r_{MP}(d\l;\a)\left(\frac{b}{ab-(1-\a)\b^2\l}\right)\right)\nn\\
&+&(j-1)\kappa\left(-\frac{b(1-\a) R^2}{2}+\frac{1-2\a}{2b}+\frac\a2 \int \r_{MP}(d\l;\a)\left(\frac{a}{ab-(1-\a)\b^2\l}\right)\right)\nn\\
&=&\e-(2-j)\kappa\partial_a\bar A-(j-1)\kappa\partial_b\bar A\geq0\,. \nn
\eea
Since the inequality must hold for any $\e>0$, $\kappa\in\{-1,1\}$ and $j\in\{1,2\}$, we have to pick $(\bar a, \bar b)=\arg\min \bar A$. Therefore 
$$
\lim\sup_{N_1,N_2} A_{N_1,N_2,\e}\geq \max(\bar A(\bar a,\bar b), \{A^{\e,\kappa_1,\kappa_2}(\eta; \bar a,\bar b)\}_{\kappa_1,\kappa_2\in\{-1,1\}})= \min_{ab>\b^2(1-\a)\l_+}\bar A( a, b)\quad \forall \e>0\,,
$$
which combined with (\ref{eq:upperbound.RBM}) proves the theorem. 
\end{proof}

%%%%
\section{Legendre equivalences of priors}\label{section:legendre}

In this section we explain the Legendre equivalence of spherical models on general terms.
First of all we prove some a priori estimates ensuring the boundedness of the free energy in the thermodynamic limit. This will be used to cut the tails of the Gaussian distributions of the prior. 

The first quick remark is that combining Theorem \ref{Th:HM-sfer} and a marginalisation we have
 \begin{corollary}
Let $A_N$ be the free energy of $RBM_{\s,\g}$. Then $\lim_NA_N$ exists $P$-a.s.
 \end{corollary}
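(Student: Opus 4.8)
The plan is to obtain the corollary directly from the Gaussian marginalisation of the hidden layer, which reduces $RBM_{\s,\g}$ to the spherical Hopfield model already treated in Theorem~\ref{Th:HM-sfer}. Since the layer $y$ enters the Hamiltonian only linearly, I would first integrate it out: using $\int_{\R^{N_2}}\g_{N_2,\theta}(dy)\,e^{(w,y)}=e^{\frac{\theta}{2}\|w\|^2}$ with $w=\frac{\b}{\sqrt{N_1+N_2}}\,\Xi^Tx$ one gets
\[
Z^{RBM_{\s,\g}}_{N_1,N_2,\b}=\int\s_{N_1,R}(dx)\,\exp\Big(\tfrac{\theta\b^2}{2(N_1+N_2)}\,(x,\Xi\Xi^Tx)\Big).
\]
By rotation invariance of $\s_{N_1,R}$ I would then diagonalise $\Xi\Xi^T$; writing $\l_i:=\l_i(\tfrac{1}{N_2}\Xi\Xi^T)$ and using $(x,\Xi\Xi^Tx)=N_2\sum_{i=1}^{N_1}\l_ix_i^2$ this becomes
\[
A^{RBM_{\s,\g}}_{N_1,N_2}(\b)=\Phi_{N_1,N_2}(c_N),\qquad \Phi_{N_1,N_2}(c):=\tfrac{1}{N_1+N_2}\log\int\s_{N_1,R}(dx)\,e^{\frac{c}{2}\sum_{i=1}^{N_1}\l_ix_i^2},\qquad c_N:=\tfrac{\theta\b^2N_2}{N_1+N_2}\longrightarrow\theta\b^2(1-\a).
\]
The functional $\Phi_{N_1,N_2}$ is exactly the one estimated in the proof of Theorem~\ref{Th:HM-sfer} (there at the value $c=\b(1-\a)$), and Lemmas~\ref{lemma-gusci} and~\ref{lemma:HM-sfer} carry over unchanged, since they concern only the matrix $\tfrac{1}{N_2}\Xi\Xi^T$.

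It then remains to absorb the harmless $N$-dependence of the effective coupling $c_N$. Since on the sphere $0\le\sum_{i=1}^{N_1}\l_ix_i^2\le(\max_i\l_i)\,R^2N_1\le aR^2N_1$ a.s.\ for $N$ large, once a fixed $a>\l_+$ is chosen (by {\bf P2)}), the map $c\mapsto\Phi_{N_1,N_2}(c)$ is $\tfrac{aR^2}{2}$-Lipschitz on $[0,\infty)$, uniformly in $N$. Hence
\[
\big|\Phi_{N_1,N_2}(c_N)-\Phi_{N_1,N_2}\big(\theta\b^2(1-\a)\big)\big|\le\tfrac{aR^2}{2}\,\big|c_N-\theta\b^2(1-\a)\big|\to0,
\]
while $\Phi_{N_1,N_2}(\theta\b^2(1-\a))$ converges $P$-a.s.\ by rerunning the shell-and-Laplace argument of the proof of Theorem~\ref{Th:HM-sfer} with $c=\theta\b^2(1-\a)$ in place of $\b(1-\a)$. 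Therefore $\lim_N A^{RBM_{\s,\g}}_{N_1,N_2}$ exists $P$-a.s.\ and equals the right-hand side of (\ref{eq:AHMsfer}) with $\b$ replaced by $\theta\b^2$.

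The only point requiring a little care is the interchange implicit in ``rerunning'': the bounds in the proof of Theorem~\ref{Th:HM-sfer} must be locally uniform in the coupling $c$ around $\theta\b^2(1-\a)$, so that replacing $c_N$ by its limit is legitimate — but this is immediate, as those bounds depend continuously on $c$. I do not expect a genuine obstacle: the reduction is exact precisely because the eliminated integral is Gaussian, and, in contrast with Section~\ref{section:legendre}, no truncation of unbounded variables is needed here, the surviving layer already living on a compact sphere.
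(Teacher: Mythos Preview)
Your argument is correct and is precisely the route the paper intends: marginalise the Gaussian layer to reduce $RBM_{\s,\g}$ to the spherical Hopfield partition function, then invoke Theorem~\ref{Th:HM-sfer}. The paper states this in one line (``combining Theorem~\ref{Th:HM-sfer} and a marginalisation''); your additional Lipschitz control of the finite-$N$ coupling $c_N\to\theta\b^2(1-\a)$ is a harmless elaboration of a point the paper leaves implicit (the $(1-\a)$ in the diagonalised Hamiltonian already arises as the limit of $N_2/(N_1+N_2)$).
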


Next we focus on the Hopfield model with Gaussian prior previously defined.

\begin{lemma}\label{lemma:apriori-bound-HM}
Let $A_{N_1,N_2}$ be the free energy of $HM_{\r}$. There is $f(\l_+,\b)$ continuous and bounded for which
\be\label{eq:ann-bound}
\lim\sup_{N_1,N_2} E[A_{N_1,N_2}]\leq f(\l_+,\b)\,. 
\ee
\end{lemma}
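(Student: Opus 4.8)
The plan is to obtain an annealed upper bound by bounding $E[Z_{N_1,N_2}]$, where $Z_{N_1,N_2}=\int \r_{N_1}(dx)\,e^{-\b H(x)}$ with $\r_{N_1}(dx)=e^{-N_1 r(\|x\|/\sqrt{N_1})}\g_{N_1,\theta}(dx)$. By Jensen, $E[A_{N_1,N_2}]\le \frac{1}{N_1+N_2}\log E[Z_{N_1,N_2}]$, so it suffices to control $\frac{1}{N_1+N_2}\log E[Z_{N_1,N_2}]$ uniformly. First I would disintegrate the Gaussian prior along spheres exactly as in \eqref{eq:cont}: writing $x=R_1 z$ with $z\in S_{N_1}(1)$ and using the surface-measure asymptotics $\frac{1}{N_1}\log(|S_{R_1\sqrt{N_1}}|/\sqrt{2\pi\theta}^{N_1})\to \log R_1-\tfrac12(\log\theta-1)$, one gets
\be
Z_{N_1,N_2}=\int_0^\infty dR_1\, e^{-N_1 r(R_1)-\frac{N_1 R_1^2}{2\theta}+N_1(\log R_1-\frac12(\log\theta-1)+o(1))}\int_{S_{N_1}(R_1)}\s_{N_1,R_1}(dz)\,e^{-\b H(z)}\,,
\ee
and the inner spherical integral is precisely the partition function of $HM_\s$ at radius $R_1$, whose $E[\cdot]$ is, by the representation \eqref{eq:reprZ-Hop} (with the annealed version of Lemma \ref{lemma:HM-sfer} discarding the bad spectral event), bounded by $e^{N_1(\a^{-1}\bar A^\s(R_1)+o(1))}$ for a function $\bar A^\s(R_1)$ that is \emph{quadratic} in $R_1$ — indeed from \eqref{eq:reprZ-Hop} one reads $\frac1{N_1}\log E[\text{inner}]\le q R_1^2-\frac12 E[\frac1{N_1}\sum_i\log(2q-\b(1-\a)\l_i)]$ for any fixed $q>\b(1-\a)\l_+/2$, and the $\log$-term converges to a finite constant by \textbf{P1)}.

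Putting these together, $\frac{1}{N_1+N_2}\log E[Z_{N_1,N_2}]\le \a\sup_{R_1>0}\bigl(\a^{-1}q R_1^2 - r(R_1) -\frac{R_1^2}{2\theta}+\log R_1 + C(q,\theta,\a,\b)\bigr)+o(1)$. The key point is that the growth condition $\lim_{x\to\infty} r(x)/x^{2+\e}=\infty$ forces $-r(R_1)+(\a^{-1}q-\frac1{2\theta})R_1^2+\log R_1\to-\infty$ as $R_1\to\infty$, hence the supremum over $R_1$ is attained on a compact set and is a finite number depending only on $(\l_+,\b)$ (and the fixed auxiliary parameters $\theta,\a,q$, which we absorb). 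Continuity and boundedness of the resulting $f(\l_+,\b)$ in $\b$ follow because the whole expression is continuous in $\b$ and the relevant $\sup$ ranges over a $\b$-locally-uniform compact set; this gives \eqref{eq:ann-bound}.

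The main obstacle I anticipate is making the interchange of the $R_1$-integral with the expectation and the Laplace evaluation genuinely uniform: the inner spherical partition function must be controlled in expectation \emph{simultaneously for all radii} $R_1$, including the large-$R_1$ tail where one cannot invoke convergence of the empirical spectral distribution. This is exactly why one fixes $q>\b(1-\a)\l_+/2$ once and uses the crude deterministic bound $Z^\s_{N_1}(R_1)\le e^{qR_1^2 N_1}(2\pi)^{N_1/2}|S_{R_1\sqrt{N_1}}|^{-1}\prod_i(2q-\b(1-\a)\l_i)^{-1/2}$ valid on the event $\{\|\Xi\Xi^T/N_2\|_{op}\le a\}$ for $a>\l_+$ with $2q>\b(1-\a)a$, while the complementary event contributes negligibly by Lemma \ref{lemma:HM-sfer} together with the a priori annealed bound one gets by dropping $r\ge 0$ on a bounded region and using $r\to\infty$ to kill the rest; the Gaussian factor $e^{-N_1 R_1^2/(2\theta)}$ and the super-quadratic $-N_1 r(R_1)$ then provide the integrability in $R_1$ needed to pass from the integral to its Laplace value with only an $o(N_1)$ error.
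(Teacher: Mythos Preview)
Your approach has a genuine gap at the very first step. Applying Jensen to pass to the annealed quantity $\frac{1}{N_1+N_2}\log E[Z_{N_1,N_2}]$ is fatal here: for sub-Gaussian weights with unbounded support (in particular Gaussian $\xi$, which is allowed), one has $E[e^{-\b H(x)}]=\infty$ whenever $\|x\|^2$ exceeds a multiple of $N_1+N_2$, because the exponent contains $\tfrac{\b}{2(N_1+N_2)}\sum_j(\sum_i\xi_{ij}x_i)^2$ and the moment generating function of a squared sub-Gaussian blows up past a finite threshold. Since $\r_{N_1}$ puts positive mass on that region for every $N_1$, $E[Z_{N_1,N_2}]=\infty$ and the annealed bound is vacuous. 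Your good/bad spectral decomposition does not rescue this: Lemma \ref{lemma:HM-sfer} controls $E[A^\s\,1_{\{\l^*>a\}}]=E[\tfrac1N\log Z^\s\,1_{\rm bad}]$, not $E[Z\,1_{\rm bad}]$. Even if you try to bound $E[Z\,1_{\{\l^*\in(ka,(k+1)a]\}}]$ directly, on that event $Z\le e^{N_1 M_k}$ with $M_k=\max_R\bigl(\tfrac{\b(1-\a)}{2}(k+1)aR^2-r(R)+O(1)\bigr)$; for $r(x)\sim x^{2+\e}$ with $\e<2$ one gets $M_k\sim k^{(2+\e)/\e}$, which grows faster than the $k^2$ in the concentration bound $P\le e^{-cN_1k^2}$, so $\sum_k e^{N_1(M_k-ck^2)}$ diverges.

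The paper avoids this by never annealing: it works directly with the quenched quantity $E[\log Z]$, writes $E[A]=\sum_{k\ge0}E[A\mid \l^*\in(ka,(k+1)a]]\,P(\l^*\in(ka,(k+1)a])$, and uses that on each event $\log Z\le N_1\max_R(\b a(k+1)R^2-r(R))$, which is only \emph{polynomial} in $k$. Multiplied by the probability $\le e^{-ca^2k^2N_1}$, every term with $k\ge1$ vanishes in the limit, leaving only the $k=0$ contribution $f(a,\b)=\max_R(\b aR^2-r(R))$. The crucial difference is that $\log Z$ grows polynomially in $\l^*$ while $Z$ grows like $e^{N_1\cdot\mathrm{poly}(\l^*)}$; the sub-Gaussian tail of $\l^*$ kills the former but not the latter.
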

\begin{proof}
Let $a>\l_+$ and set for brevity $\l^*:=\max_{i\in[N_1]}\l_i$. We write
$$
\frac{1}{N_1+N_2}E[\log Z_{N_1,N_2}]=\sum_{k\geq0}E\left[\frac{1}{N_1+N_2}\log Z_{N_1,N_2}\,|\,\,(k+1)a\geq \l^*> ka\right ]P((k+1)a\geq \l^*> ka)\,. 
$$
Inside the conditional expectation we can bound
$
H_N(z)\leq a\|z\|^2\, 
$. Therefore
\bea
E\left[\frac{1}{N_1+N_2} \log Z_{N_1,N_2}\,|\,\,(k+1)a\geq \l^*> ka\right]&\leq& \frac{1}{N_1+N_2} \log\int\g_{N_1,\theta}(dx) \exp\left(\|x\|^2(\b a(k+1))-Nr\left(\frac{\|x\|}{\sqrt N}\right)\right)\nn\\
&\leq&\frac{1}{N_1+N_2}\max_{R\geq0}\left(R^2(\b a(k+1))-N_1r\left(\frac{R}{\sqrt N_1}\right)\right)\,\nn\\
&=&\max_{R\geq0}\left(R^2(\b a(k+1))-r\left(R\right)\right)\,.\nn
\eea
On the other hand by assumption
\be
P((k+1)a\geq \l^*> ka)\leq 2e^{-ca^2k^2N_1}\,,\quad c>0\,. 
\ee
In conclusion
\be
E[A_{N_1,N_2}]\leq \sum_{k\geq0}e^{-ca^2k^2N_1}\max_{R\geq0}\left(R^2(\b a(k+1))-r\left(R\right)\right)=:f(a,\b)\,,
\ee
a continuous bounded function. In particular the estimate holds also for $a\to\l_+$. 
\end{proof}

The analogue statement for $RBM_{\r^2}$:
\begin{lemma}\label{lemma:apriori-bound-HM}
Let $A_{N_1,N_2}$ be the free energy of $RBM_{\r^2}$. There is $f(\l_+,\b)$ continuous and bounded for which
\be\label{eq:ann-bound}
\lim\sup_{N_1,N_2} E[A_{N_1,N_2}]\leq f(\l_+,\b)\,. 
\ee
\end{lemma}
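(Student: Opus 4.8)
The plan is to adapt the proof of the $HM_{\r}$ case treated just above to the bilinear two-layer coupling. First I would pass to the singular value representation \eqref{eq:diagRBM}: since $\r^2_{N_1,N_2}$ depends on $(x,y)$ only through $\|x\|$ and $\|y\|$, it is invariant under the orthogonal changes of variable $x\mapsto\tilde x$, $y\mapsto\tilde y$ hidden in \eqref{eq:diagRBM}, so that
$$
Z_{N_1,N_2}=\int\r^2_{N_1,N_2}(d\tilde xd\tilde y)\,\exp\Big(\b\sqrt{1-\a}\sum_{i\in[N_1]}\s_i\tilde x_i\tilde y_i\Big),\qquad \s_i^2=\l_i(\Xi\Xi^T/N_2).
$$
Then, exactly as in the Hopfield case, I would stratify the disorder according to $\l^*:=\|\Xi\Xi^T/N_2\|_{op}$, fix $a>\l_+$, and decompose $\frac{1}{N_1+N_2}E[\log Z_{N_1,N_2}]$ into the sum over $k\ge0$ of the conditional expectations on the events $\{(k+1)a\ge\l^*>ka\}$ weighted by their probabilities.

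On $\{(k+1)a\ge\l^*>ka\}$ one has $\s_i\le\sqrt{(k+1)a}$ for every $i\in[N_1]$, so Cauchy--Schwarz gives $\sum_{i\in[N_1]}\s_i\tilde x_i\tilde y_i\le\sqrt{(k+1)a}\,\|\tilde x\|\|\tilde y\|=\sqrt{(k+1)a}\,\|x\|\|y\|$. The resulting upper bound for the integrand is a function of $(x,y)$ only through $R_1=\|x\|$ and $R_2=\|y\|$, so — bounding $\g_{N_1,\theta}\otimes\g_{N_2,\theta}$ by its total mass $1$ and rescaling $u=R_1/\sqrt{N_1}$, $v=R_2/\sqrt{N_2}$, which turns $\b\sqrt{(1-\a)(k+1)a}\,R_1R_2-\sqrt{N_1N_2}\,r(R_1/\sqrt{N_1},R_2/\sqrt{N_2})$ into $\sqrt{N_1N_2}$ times an $N$-independent quantity — I would get, on that event,
$$
\frac{1}{N_1+N_2}\log Z_{N_1,N_2}\le\frac{\sqrt{N_1N_2}}{N_1+N_2}\,G_k(a,\b),\qquad G_k(a,\b):=\sup_{u,v\ge0}\Big(\b\sqrt{(1-\a)(k+1)a}\,uv-r(u,v)\Big).
$$
The growth hypotheses on $r$ force $r(u,v)/(u^2+v^2)\to\infty$ as $u^2+v^2\to\infty$ (and, with continuity, that $r$ is bounded below), which plays $uv\le(u^2+v^2)/2$ against $r$ and makes $G_k(a,\b)$ finite for every $k$, with at most polynomial growth in $k$ coming from the superquadratic growth of $r$.

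For the probabilities I would use \textbf{P2)}: arguing as in the previous lemma, \eqref{eq:impo-spectrum} gives $c=c(a,\l_+)>0$ with $P((k+1)a\ge\l^*>ka)\le2e^{-cN_1k^2}$ for $k\ge1$, while for $k=0$ one uses $P\le1$ together with $\frac{\sqrt{N_1N_2}}{N_1+N_2}\le\frac12$. Summing,
$$
E[A_{N_1,N_2}]\le\frac{\sqrt{N_1N_2}}{N_1+N_2}\Big(G_0(a,\b)+2\sum_{k\ge1}G_k(a,\b)\,e^{-cN_1k^2}\Big),
$$
a convergent series; letting $N_1,N_2\to\infty$ kills the $k\ge1$ terms (and $\frac{\sqrt{N_1N_2}}{N_1+N_2}\to\sqrt{\a(1-\a)}$) and then letting $a\downarrow\l_+$, using continuity of $a\mapsto G_0(a,\b)$, yields \eqref{eq:ann-bound} with, e.g., $f(\l_+,\b)=\sqrt{\a(1-\a)}\sup_{u,v\ge0}\big(\b\sqrt{(1-\a)\l_+}\,uv-r(u,v)\big)$. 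The only step genuinely new compared with the Hopfield case — and the one I expect to require the most care — is the bound on $G_k$: the coercive term $r(u,v)$ must now be played against the \emph{bilinear} quantity $uv$ rather than a quadratic form, and one must verify that the resulting supremum is not merely finite but grows slowly enough in $k$ for the super-exponential spectral tail from \textbf{P2)} to keep the series summable.
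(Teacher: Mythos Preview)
Your argument is correct and is exactly the paper's: the paper's own proof is the single line ``Same proof as before, noting $\b\sqrt{1-\a}\sum_i\s_ix_iy_i-\sqrt{N_1N_2}\,r(\cdot,\cdot)\le\sqrt{N_1N_2}\max_{R_1,R_2}\big(\b\sqrt{1-\a}\,aR_1R_2-r(R_1,R_2)\big)$ for $a>\s_+$,'' and you have simply spelled out the stratification over $\{ka<\l^*\le(k+1)a\}$ and the rescaling to $(u,v)$ with the same care as in the $HM_\r$ lemma above.

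One remark on the step you yourself flag as delicate: under the bare hypothesis $r(u,v)/u^2,\,r(u,v)/v^2\to\infty$ your claim $G_k<\infty$ is correct, but ``at most polynomial growth in $k$'' need not be --- for instance $r(u,v)=(u^2+v^2)\log\log(e^e+u^2+v^2)$ satisfies the hypothesis yet gives $G_k$ of order $\exp(e^{c\sqrt k})$, and then $\sum_kG_k e^{-cN_1k^2}$ is not even summable. This looseness is already present in the paper's sketch; for the concrete regulariser the paper actually has in mind ($r(u,v)=u^2v^2/2$, which gives $G_k\sim k$) there is no problem, and a mild strengthening of the growth assumption on $r$ --- say $r(u,v)\ge c(u^2+v^2)^{1+\e}$ for large $u^2+v^2$, in analogy with the single-layer hypothesis --- would make your polynomial-growth claim true and the argument airtight.
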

\begin{proof}
Same proof as before, noting
\bea
\b(1-\a)\sum_{i}\s_ix_iy_i-\sqrt{N_1N_2}r\left(\frac{\|x\|}{\sqrt {N_1}}, \frac{\|y\|}{\sqrt {N_2}}\right)&\leq& \b(1-\a)a\|x\|\|y\|-\sqrt{N_1N_2}r\left(\frac{\|x\|}{\sqrt {N_1}}\right)\nn\\
&\leq&\max_{R_1,R_2}\left(\b(1-\a)aR_1R_2-r(R_1,R_2)\right)\,\nn
\eea
for any $a>\s_+$. 
\end{proof}

The above results immediately allow us to achieve the following useful lemma. 

\begin{lemma}\label{lemma:coda}
Let $R,\d>0$, $N\in\N$. It holds for some $C,c>0$
\bea
\int_{\{\|x\|^2\geq R^2N_1^{1+\d}\}} \r_{N_1}(dx)e^{-\b H_{N_1,N_2}(x)}&\leq& C e^{-cN_1^{1+\d}}\,,\label{eq:coda1}\\
\int_{\{\|x\|^2\geq R^2N_1^{1+\d}\}} \g_{N_1,\theta}(dx)\s_{R,N_2}(dy)e^{-\b H_{N_1,N_2}(x,y)}&\leq& C e^{-cN_1^{1+\d}}\,,\label{eq:coda2}\\
\int_{\{\|x\|^2\geq R^2N_1^{1+\d}\}\cup \{\|y\|^2\geq R^2N_2^{1+\d}\}} \r_{N_1,N_2}^2(dxdy)e^{-\b H_{N_1,N_2}(x,y)}&\leq& C e^{-cN_1^{1+\d}}\label{eq:coda3}\,.
\eea 
\end{lemma}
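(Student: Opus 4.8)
The plan is to prove the three estimates by the same scheme: split the Gaussian prior into spherical shells (via the standard radial disintegration of Gaussian measure used already in \eqref{eq:cont}), bound the energy on each shell by the crude operator-norm estimate, and observe that the superlinear growth of $r$ together with the Gaussian weight $e^{-\|x\|^2/2\theta}$ overwhelms the energy contribution once $\|x\|^2 \geq R^2 N_1^{1+\delta}$. Concretely, for \eqref{eq:coda1} I would first use $H_{N_1,N_2}(x) = -(1-\alpha)\sum_i \lambda_i x_i^2$ with the spectrum localised in $(-\infty,\lambda_+]$ (Lemma \ref{lemma:HM-sfer}, up to an event of probability $O(e^{-cN_1})$ which is absorbed into the constants), so that $-\beta H_{N_1,N_2}(x) \leq \beta(1-\alpha)\lambda_+\|x\|^2 =: a\|x\|^2$ on the relevant event. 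Hence
\be
\int_{\{\|x\|^2\geq R^2N_1^{1+\d}\}}\r_{N_1}(dx)e^{-\b H_{N_1,N_2}(x)}\leq \int_{\{\|x\|^2\geq R^2N_1^{1+\d}\}} e^{-N_1 r(\|x\|/\sqrt{N_1})} \g_{N_1,\theta}(dx)\, e^{a\|x\|^2}\,.
\ee

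Next I would pass to radial coordinates: writing $\|x\|^2 = N_1 u^2$ and using that the Gaussian surface measure on $S^{N_1}(u)$ has mass $\asymp e^{N_1(\log u + \frac12 - \frac{u^2}{2\theta} - \frac12\log\theta)}\,du$ (the same computation that gives $\frac1{N_1}\log(|S_{R\sqrt{N_1}}|/\sqrt{2\pi}^{N_1})\to\log R+\frac12$ in the proof of Theorem \ref{Th:HM-sfer}), the integral is bounded by a constant times
\be
\int_{u\geq R N_1^{\d/2}} \exp\left(N_1\left(a u^2 - r(u) + \log u + \tfrac12 - \tfrac{u^2}{2\theta} - \tfrac12\log\theta\right)\right)\,du\,.
\ee
The exponent's bracket is, by the hypothesis $\lim_{u\to\infty} r(u)/u^{2+\e} = \infty$, eventually dominated by $-\tfrac12 r(u)$ for $u$ large; and for $u \geq R N_1^{\delta/2}$ one has $r(u) \geq c\, u^{2+\e} \geq c R^{2+\e} N_1^{(1+\e/2)\delta}\cdot u^{... }$, which beats the $a u^2$ term and in fact grows like a positive power of $N_1^{1+\delta}$ once integrated. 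The cleanest way to package this: fix $u_0$ large enough that $a u^2 + \log u + \tfrac12 \leq \tfrac14 r(u)$ and $\tfrac{u^2}{2\theta}\geq \tfrac14 r(u)$ for $u\geq u_0$; then for $N_1$ large the region $u\geq R N_1^{\delta/2}$ lies in $\{u\geq u_0\}$, the exponent is $\leq -N_1 r(u)/2 \leq -N_1 \cdot c R^{2+\e} N_1^{(1+\e/2)\delta}$, and the $du$-integral converges (the $-u^2/2\theta$ tail controls it), giving the bound $C e^{-c N_1^{1+\delta}}$ with room to spare. For \eqref{eq:coda2} the argument is identical: marginalise out $y$ over the sphere $S_{R,N_2}$ (this contributes an $O(1)$-in-the-exponent factor per dimension that is harmless), bound $-\beta H_{N_1,N_2}(x,y) \leq \beta\sqrt{1-\alpha}\,\bar\sigma\,\|x\|\|y\| \leq \tfrac{a}{2}\|x\|^2 + C N_2$ on the good event $\{\max_i\sigma_i\leq\bar\sigma\}$ using that $\|y\|^2 = R^2 N_2$ is fixed, and then run the same radial estimate in $x$.

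For \eqref{eq:coda3} one splits the domain into $\{\|x\|^2\geq R^2 N_1^{1+\delta}\}$ and $\{\|y\|^2\geq R^2 N_2^{1+\delta}\}$ and handles each by symmetry. On the first piece, bound $-\beta H \leq \beta(1-\alpha)\bar\sigma\|x\|\|y\| - \sqrt{N_1 N_2}\, r(\|x\|/\sqrt{N_1},\|y\|/\sqrt{N_2})$ and use the hypothesis $\lim_{x^2+y^2\to\infty} r(x,y)/x^2 = \infty$: writing $u = \|x\|/\sqrt{N_1}$, $v=\|y\|/\sqrt{N_2}$, the exponent is $\sqrt{N_1 N_2}(\beta(1-\alpha)\bar\sigma\, uv - r(u,v)) + (\text{radial measure terms})$; since $r(u,v)/u^2\to\infty$, for $u$ large enough $r(u,v) \geq 2\beta(1-\alpha)\bar\sigma\, u v + u^2$ uniformly in $v$ (using also $r(u,v)\gtrsim v^2$ to absorb the $v$-dependence for small $v$, and $r\gtrsim u v$ directly otherwise), so the integrand decays like $e^{-\sqrt{N_1 N_2}\, u^2}$ on $\{u\geq R N_1^{\delta/2}\}$, which integrates to $C e^{-c N_1^{1+\delta}}$ (using $N_2 \asymp N_1$). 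I expect the main obstacle to be purely bookkeeping: keeping the inner (over-the-sphere or over-the-other-layer) integral's sub-exponential prefactors genuinely under control and checking that the elementary inequalities comparing $r$ to the quadratic energy hold \emph{uniformly} in the second variable in the two-layer case \eqref{eq:coda3} — this is where the precise form of the growth hypothesis on $r\colon\R^2\to\R$ is used, and one must be slightly careful that $r(u,v)$ can be small when $v$ is small even if $u$ is large, which is handled because then the competing term $\beta(1-\alpha)\bar\sigma\, uv$ is also small. None of the steps is deep; the content is entirely in the superlinearity of $r$, and the stated exponent $N_1^{1+\delta}$ is dictated by the threshold $\|x\|^2\geq R^2 N_1^{1+\delta}$ together with $r$ being at least quadratic.
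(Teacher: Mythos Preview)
Your approach is correct but differs from the paper's, which is considerably shorter. The paper does not bound the Hamiltonian or pass to radial coordinates at all. Instead, for \eqref{eq:coda1} it simply peels off part of the Gaussian weight: on $\{\|x\|^2\geq R^2N_1^{1+\d}\}$ one has (up to harmless normalisation factors)
\[
\r_{N_1}(dx)\;=\;e^{-N_1 r(\|x\|/\sqrt{N_1})}\g_{N_1,\theta}(dx)\;\lesssim\;e^{-\frac{R^2N_1^{1+\d}}{2\theta}}\,e^{-N_1 r(\|x\|/\sqrt{N_1})}\g_{N_1,2\theta}(dx)\,,
\]
so the tail integral is bounded by $e^{-cN_1^{1+\d}}Z_{N_1,N_2}(2\theta)$, where $Z_{N_1,N_2}(2\theta)$ is the \emph{full} partition function of $HM_\r$ with variance parameter $2\theta$. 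The a~priori bound of Lemma~\ref{lemma:apriori-bound-HM} (and self-averaging) then gives $Z_{N_1,N_2}(2\theta)\leq e^{O(N_1+N_2)}$, which is swallowed by $e^{-cN_1^{1+\d}}$. The superlinear growth of $r$ is thus not used directly here; it was already consumed in Lemma~\ref{lemma:apriori-bound-HM}. Your route---spectral bound on $H$, radial disintegration, and explicit use of $r(u)\gg u^2$---recovers the same estimate by hand and is entirely self-contained, at the cost of the extra bookkeeping you anticipate (especially the uniformity in the second variable for \eqref{eq:coda3}, which you correctly note is handled because the hypothesis $r(u,v)/u^2\to\infty$ is along $u^2+v^2\to\infty$ and hence uniform in $v$). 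The paper's argument is more modular; yours is more transparent about where each hypothesis enters.
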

\begin{proof}
We prove only (\ref{eq:coda1}), as (\ref{eq:coda2}), (\ref{eq:coda3}) are similar. Let us write
\bea
\int_{\{\|x\|^2\geq R^2N_1^{1+\d}\}} \r(dx)e^{-\b H_{N_1,N_2}(x)}&\leq& e^{-\frac{R^2N_1^{1+\d}}{2\theta}}\int_{\{\|x\|^2\geq R^2N_1^{1+\d}\}} \g_{N_1,2\theta}(dx)e^{-\b H_{N_1,N_2}(x)-N_1r\left(\frac{\|x\|}{\sqrt N_1}\right)}\nn\\
&\leq&e^{-\frac{R^2N_1^{1+\d}}{2\theta}} Z_{N_1,N_2}(2\theta) =\exp\left( (N_1+N_2)A_{N_1,N_2}(2\theta)-\frac{R^2N_1^{1+\d}}{2\theta}\right)\,. \nn
\eea
Here we have emphasised the dependence on $\theta$ of partition function and free energy. Then (\ref{eq:coda1}) follows from Lemma \ref{lemma:apriori-bound-HM}.
\end{proof}

Now we are ready to prove the Legendre equivalences of Theorem \ref{th:main}. We shall prove only (\ref{eq:mainHM}), (\ref{eq:mainRBM1}), (\ref{eq:mainRBM2}); the dual relations (\ref{eq:mainHM-dual}), (\ref{eq:mainRBM1-dual}), (\ref{eq:mainRBM2-dual}) then follow directly, as one can easily verify the inverse Legendre transformation to be also well defined and involutive. 

We start by $i)$, where we deal with a single Gaussian prior. 
Let $\e>0$, $\d>0$. From now on we will systematically omit the dependence on $\d$ of the objects we will operate with. Let further $r<N_1^{\d}/\e$, $R_0:=0$, $R_{r+1}:=N_1^{\d}$, $\{R_i\}_{i=1,\ldots,r}\subset[0,N_1^{\d})$ with $|R_{i+1}-R_i|<2\e$, and decompose
$
\R^{N_1}:=\bigcup_{i=0}^rS^{[i]}_{N_1,\e}\cup T
$, where 
$$
S^{[i]}_{N_1,\e}:=\{z\in\R^{N_1}\,\:\, R_i\sqrt{N_1}\leq \|z\|_2\leq R_{i+1}\sqrt{N_1}\}\,,\quad T:=\{z\in\R^{N_1}\,\:\, \|z\|_2\geq {N_1}^{\d+\frac12}\}\,.
$$
Comparing with (\ref{eq:Sfshell}) one easily sees that the $S^{[i]}_{\e,N_1}$ are spherical shells. We denote by $\s^{[i]}_{N_1,\e}$ the uniform distributions on  these shells. 
Then we have
\be\label{eq:dec}
Z^{\r}_{N_1,N_1}=\sum_{i=0}^r Z^{\r}_{N_1,N_2}[i]+\tilde Z^{\r}_{N_1,N_2}\,,
\ee
where
$$
Z^{\r}_{N_1,N_2}[i]:=\int_{S^{[i]}_{N_1,\e}}\r_{N_1}(dx)e^{-\b H_{N_1,N_2}(x)}\,,\quad \tilde Z^{\r}_{N_1,N_2}:=\int_{T}\r_{N_1}(dx)e^{-\b H_{N_1,N_2}(x)}\,.
$$
The tail term $\tilde Z^\r_{N_1,N_2}$ gives a negligible contribution thanks to Lemma \ref{lemma:coda} and we will ignore it all the time.
Thus by (\ref{eq:dec}) we get 
$$
\max_{i\in[r]} Z^{\r}_{N_1,N_2}[i]\leq Z^{\r}_{N_1,N_2,\b}\leq \frac{N^{\d}}{\e}\max_{i\in[r]} Z^{\r}_{N_1,N_2}[i]\,.
$$
Therefore setting
\be\label{eq:def-tildeA}
A_{N_1,N_2}[i]:=\left(\frac{1}{N_1+N_2}\log Z^{\r}_{N_1,N_2}[i]\right)
\ee
we have
\be\label{eq:sandwich}
\max_{i\in[r]}\left(A_{N_1,N_2}[i]\right)\leq \frac{1}{N_1+N_2}\log Z^{\r}_{N_1,N_2}\leq \max_{i\in[r]}\left(A_{N_1,N_2}[i]\right)+\frac{\d\log N_1-\log \e}{N_1+N_2}\,. 
\ee
So the free energy of $HM_{\r}$ is given by the limit of $\max_{i\in[r]}\left(A_{N_1,N_2}[i]\right)$, provided it exists.

We notice now that by continuity for any $x\in S^{[i]}_{\e,N_1}$
\be\label{eq:Rtilde}
\frac{\|x\|^2}{2\theta}+N_1r\left(\frac{\|x\|}{\sqrt N_1}\right)=\frac{\tilde R_i^2N_1}{2\theta}+N_1r\left(\tilde R_i\right)+\OOO{\e} %\qquad N_1-\mbox{unif.}
\ee
where $\tilde R_i\in[R_i, R_{i+1}]$. Therefore
\be
Z^{\r}_{N_1,N_2,\b}[i]=e^{-\frac{\tilde R_i^2N_1}{2\theta}-N_1r\left(\tilde R_i \right)-\OOO{\e}}\frac{|S^{[i]}_{N_1,\e}|}{\sqrt{2\pi\theta}^{N_1}}\int \s^{[i]}_{N_1,\e}(dx)e^{-\b H_{N_1,N_2}(x)}\,
\ee
and
\bea
A_{N_1,N_2}[i]&=&\frac{N_1}{N_1+N_2}\left(\frac{1}{N_1}\log\left(\frac{|S^{[i]}_{\e,N_1}|}{\sqrt{2\pi\theta}^{N_1}}\right)-\frac{\tilde R_i^2}{2\theta }-r\left(\tilde R_i\right)\right)+\widehat A_{N_1,N_2}^{\e}\left(\tilde R_i\right)+\frac{\OOO{\e}}{N_1+N_2}\nn\\
&=:&\tilde A_{N_1,N_2,\e}\left(\tilde R_i\right)+\frac{\OOO{\e}}{N_1+N_2}\nn\,.
\eea
Thus we see that the existence of the thermodynamic limit is ensured by Lemma \ref{lemma-gusci} and we have
% Moreover we note that in fact $A_{N_1,N_2}[i]$ depends on $i$ only through $\tilde R_i/\sqrt {N_1}$ at leading order in $\e$ (in particular the thermodynamic limits of these two quantities are the same). Thus we can adjust the sequence $\{R_i\}_{i\in[r]}$ (or equivalently $\{\tilde R_i\}_{i\in[r]}$) so to have
$$
\lim_{N_1,N_2}\max_{i\in[r]}\left(A_{N_1,N_2}[i]\right)=\sup_{R>0}\lim_{N_1,N_2}\left(\tilde A_{N_1,N_2,\e}(R)\right)=:\sup_{R>0}\tilde A_\e(R)\,. 
$$
By the uniform concavity of $\tilde A_\e(R)$ we have
$$
\lim_{\e\to0}\sup_{R>0}\tilde A_\e(R)=\sup_{R>0}\lim_{\e\to0}\tilde A_\e(R)=-\frac{\a}{2}\log\theta+\a\sup_{R>0}\left(\log R+\frac12+\a^{-1}A_{\s}(R,\b)-r(R)-\frac{R^2}{2\theta}\right)\,
$$
and the proof of (\ref{eq:mainHM}) is concluded. 

\

Note that in this argument the regularising function $r$ plays essentially no role. So it can be set to zero and repeat verbatim all the previous steps for the $RBM_{\s,\g}$. This way we obtain (\ref{eq:mainRBM1}). 

\

Finally we turn to $RBM_{\r^2}$. Here we have to slice up both Gaussian priors and the previous construction easily extends. We just sketch the argument, stressing only the points in which it differs from above. Let $\e>0$, $\d>0$, $r<N_1^{\d}/\e$, $r'<N_2^{\d}/\e$, $R_0,R'_0:=0$, $R_{r+1}:=N_1^{\d}$, $R'_{r'+1}:=N_2^{\d}$, $\{R_i\}_{i=1,\ldots,r}\subset[0,N_1^{\d})$ with $|R_{i+1}-R_i|<2\e$, $\{R'_i\}_{i=1,\ldots,r'}\subset[0,N_2^{\d})$ with $|R'_{i+1}-R'_i|<2\e$. Decompose
$
\R^{N_1}\times \R^{N_2}:=\bigcup_{i\in[r], j\in[r']}S_{N_1,\e}^{[i]}\times {S'}_{N_2,\e}^{[j]} \cup T
$, where 
\bea
S^{[i]}_{N_1,\e}&:=&\{z\in\R^{N_1}\,\:\, R_i\sqrt{N_1}\leq \|z\|_2\leq R_{i+1}\sqrt{N_1}\}\,,\nn\\
{S'}^{[i]}_{N_2,\e}&:=&\{z\in\R^{N_2}\,\:\, R'_i\sqrt{N_2}\leq \|z\|_2\leq R'_{i+1}\sqrt{N_2}\}\,,\nn\\
T&:=&\{z\in\R^{N_1}\,\:\, \|z\|_2\geq {N_1}^{\d+\frac12}\}\cup \{z\in\R^{N_2}\,\:\, \|z\|_2\geq {N_2}^{\d+\frac12}\}\,.\nn
\eea

Again $T$ can be neglected due to Lemma \ref{lemma:coda}. We have to evaluate
\bea
&&Z^{\r^2}_{N_1,N_2}[i,j]:=\int_{S^{[i]}_{N_1,\e}\times {S'}_{N_2,\e}^{[j]}}\r^2_{N_1N_2}(dxdy)e^{-\b H_{N_1,N_2}(x,y)}\,,\nn\\
&=&\frac{|S^{[i]}_{N_1,\e}\times {S'}_{N_2,\e}^{[j]}|}{\sqrt{2\pi\theta}^{N_1+N_2}} e^{-\frac{\tilde R^2_iN_1+{R'}^2_jN_2}{2\theta}-\sqrt{N_1N_2}r\left(\tilde{R}_i, \tilde{R'}_j\right)+\OOO{\e}} \int_{S^{[i]}_{N_1,\e}\times {S'}_{N_1,\e}^{[j]}}\s^{[i]}_{N_1,\e}(dx)\s^{[i]}_{N_1,\e}(dy) e^{-\b H_{N_1,N_2}(x,y)}\,,\nn
\eea
where the $\tilde R_i$ and $\tilde {R'}_i$ are introduced as before (see (\ref{eq:Rtilde})). Therefore

\bea
A_{N_1,N_2,\b}[i,j]&:=&\frac{1}{N_1+N_2}\log Z^{\r^2}_{N_1,N_2,\b}[i,j]\nn\\
&=&\frac{1}{N_1+N_2}\log\left(\frac{|S^{[i]}_{N_1,\e}\times {S'}_{N_2,\e}^{[j]}|}{\sqrt{2\pi\theta}^{N_1+N_2}}\right)+\frac{\OOO{\e}}{N_1+N_2} -\frac{\tilde R^2_i+{R'}^2_j}{2\theta}\nn\\
&-&\frac{\sqrt{N_1N_2}}{N_1+N_2}r\left(\tilde R_i, \tilde {R'}_j\right)+\widehat A_{N_1,N_2,\e}(\tilde R_i, \tilde R_j)\nn\,,
\eea
where $\widehat A_{N_1,N_2,\e}(\tilde R_i, \tilde {R'}_j)$ is the free energy of the RBM whose priors are the spherical shell measures of centres $\tilde R_i, \tilde {R'}_j$. The argument to pass to the thermodynamic limit is then the same, so (\ref{eq:mainRBM2}) is obtained.

%%%%%%%%%%% BIBLIOGRAFIA %%%%%%%%%%%%%%%%%%%%%

\end{document}